\newtheorem{theorem}{Theorem}[section]
\newtheorem*{theorem*}{Theorem}
\newtheorem*{remark*}{Remark}
\newtheorem{proposition}[theorem]{Proposition}
\newtheorem{lemma}[theorem]{Lemma}
\newtheorem{definition}[theorem]{Definition}
\theoremstyle{definition}
\newtheorem{remark}[theorem]{Remark}
\newtheorem{example}[theorem]{Example}
\newcommand{\N}{{\mathbb N}}     
\newenvironment{conditions}
{%
	\begin{list}{\rm (\theenumi)}%
	{\noindent%
		\usecounter{enumi}%
		\setlength{\topsep}{2pt}%
		\setlength{\partopsep}{0pt}%
		\setlength{\itemsep}{2pt}%
		\setlength{\parsep}{0pt}%
		\setlength{\leftmargin}{2.5em}%
		\setlength{\labelwidth}{1.5em}%
		\setlength{\labelsep}{0.5em}%
		\setlength{\listparindent}{0pt}%
		\setlength{\itemindent}{0pt}%
	}%
}%
{\end{list}}%
\newenvironment{conditionsiii}
{%
	\begin{list}{\rm (\roman{enumi})}%
	{\noindent%
		\usecounter{enumi}%
		\setlength{\topsep}{2pt}%
		\setlength{\partopsep}{0pt}%
		\setlength{\itemsep}{2pt}%
		\setlength{\parsep}{0pt}%
		\setlength{\leftmargin}{2.5em}%
		\setlength{\labelwidth}{1.5em}%
		\setlength{\labelsep}{0.5em}%
		\setlength{\listparindent}{0pt}%
		\setlength{\itemindent}{0pt}%
	}%
}%
{\end{list}}%
\begin{document}

\title {On Lattices of Regular Sets of Natural  Integers Closed under Decrementation}
%

\author{
Patrick C\'egielski%
\footnote{Partially supported by TARMAC ANR agreement
12 BS02 007 01.}%
\newcounter{thanks}
\setcounter{thanks}{\value{footnote}}
\footnote{LACL, EA 4219, Universit\'e Paris-Est Cr\'eteil, France.}
\and
Serge Grigorieff%
\footnotemark[\value{thanks}]
\footnote{LIAFA, CNRS and Universit\'e Paris-Diderot, France.}
\newcounter{fnnumber}
\setcounter{fnnumber}{\value{footnote}}
\and
Ir\`ene  Guessarian%
\footnotemark[\value{thanks}]
\footnotemark[\value{fnnumber}]
\footnote{Emeritus at UPMC Univ Paris 6.}
\newcounter{fnnnumber}
\setcounter{fnnnumber}{\value{footnote}}
}
%


\maketitle

\bibliographystyle{plain}

\begin{abstract}
We consider lattices of regular sets of non negative integers,
i.e. of sets definable in Presbuger arithmetic.
We prove that if such a lattice is closed under decrement
then it is also closed under many other functions:
quotients by an integer, roots, etc.

\smallskip
\noindent \textbf{Keywords.} Lattices, lattices of subsets of $\N$, regular subsets of $\N$, closure properties.
\end{abstract}

\section{Introduction}
%
\subsection{Roadmap}
We follow the terminology according to which a function $f:\N\to\N$ is non decreasing
if $a\leq b\ \Rightarrow\ f(a)\leq f(b)$ for all $a,b\in\N$.

We prove in this paper the following result:
\begin{theorem}\label{thm:MAIN}
Let $f:\N\longrightarrow\N$ be a non decreasing function.
The following conditions are equivalent:
\begin{conditions}
\item
Every lattice $\+L$ of regular subsets of $\N$
which is closed under decrement
(i.e. $L\cap L'$, $L\cup L'$ and $L-1$ are in $\+L$
whenever $L,L'\in \+L$) is also closed under $f^{-1}$
(i.e. $L\in\+L$ implies $f^{-1}(L)\in\+L$).
\item
The function $f$ satisfies the following properties:\\
(i)\phantom{i} $f(a)\geq a$ for all $a\in\N$,\\
(ii) $f(a)-f(b)\equiv 0 \pmod{(a-b)}$
for all $a,b\in\N$.
\end{conditions}
Particular exemples of such functions $f$ are division by $n$ and $n$-root
for any $n\geq1$.
\end{theorem}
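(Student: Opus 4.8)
The plan is to prove the two implications separately, after one reduction. For a regular set $L$, let $\mathcal{L}(L)$ denote the least lattice of subsets of $\N$ containing $L$ and closed under decrement. Since $n\mapsto n-1$ is injective on $\N\setminus\{0\}$, decrement commutes with $\cap$ and $\cup$, so by distributivity
\[
\mathcal{L}(L)=\Bigl\{\ \bigcup\nolimits_{i\in I}\bigcap\nolimits_{k\in S_i}(L-k)\ :\ I\text{ finite},\ \emptyset\neq S_i\subseteq\N\text{ finite}\ \Bigr\}.
\]
Moreover, if $L$ has period $p$ and threshold $N$, so that $L$ agrees past $N$ with the purely periodic set $P_R=\{n:n\bmod p\in R\}$ for some $R\subseteq\Z/p\Z$ (write $P_T$ for the analogous set for any $T$), then for $k\ge N$ one has $L-k=P_{R-k}$, depending only on $k\bmod p$; hence there are at most $N+p$ distinct sets $L-k$ and $\mathcal{L}(L)$ is finite. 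As every lattice of regular sets closed under decrement that contains $L$ contains $\mathcal{L}(L)$, condition~(1) of the theorem is equivalent to: $f^{-1}(L)\in\mathcal{L}(L)$ for every regular $L$.

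For $(1)\Rightarrow(2)$ I argue contrapositively, exhibiting in each case a counterexample of this minimal shape. If~(i) fails, take $a$ with $f(a)<a$ and $L=\{n:n\ge a\}$; then $\mathcal{L}(L)=\{\{n:n\ge b\}:0\le b\le a\}$, while $f^{-1}(L)=\{n:f(n)\ge a\}$ is, by monotonicity, a final segment $\{n:n\ge c\}$ with $c>a$ (or is empty), hence $\notin\mathcal{L}(L)$. If~(i) holds but~(ii) fails, take $a>b$ with $m:=a-b$ not dividing $f(a)-f(b)$, and $L=\{n:n\equiv f(b)\pmod m\}$; then $\mathcal{L}(L)=\{P_T:T\subseteq\Z/m\Z\}$, every member of which gives $a$ and $b$ the same status (since $a\equiv b\pmod m$), whereas $b\in f^{-1}(L)$ and $a\notin f^{-1}(L)$; so $f^{-1}(L)\notin\mathcal{L}(L)$ (regardless of whether $f^{-1}(L)$ is regular).

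For $(2)\Rightarrow(1)$, assume~(i) and~(ii) and fix a regular $L$ with period $p$, threshold $N$, pattern $R$. Put $g=f-\mathrm{id}$, so $g\ge0$ by~(i) and $(a-b)\mid g(a)-g(b)$ by~(ii). First, $g$ is either constant or tends to $+\infty$: if $g$ is non-constant and $\liminf g=\ell<\infty$, then $\ell$ is attained on an infinite set $S$, and for any $m$ one may pick $n\in S$ with $n-m>|g(m)-\ell|$, forcing $(n-m)\mid \ell-g(m)$, i.e.\ $g(m)=\ell$ — contradiction. If $g\equiv c$ then $f(n)=n+c$ and $f^{-1}(L)=L-c\in\mathcal{L}(L)$, and we are done. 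So assume $g\to\infty$. By~(ii) with modulus $p$, $f$ descends to $\bar f\colon\Z/p\Z\to\Z/p\Z$, and since $f(n)\ge n\ge N$ whenever $n\ge N$, for all $n\ge N$ we have $n\in f^{-1}(L)\iff f(n)\bmod p\in R\iff n\bmod p\in\bar f^{-1}(R)$; thus $f^{-1}(L)$ agrees with the purely periodic set $P_{\bar f^{-1}(R)}$ outside the finite interval $[0,N)$. Two observations yield $P_{\bar f^{-1}(R)}\in\mathcal{L}(L)$: (a) since~(ii) holds for \emph{all} moduli, $\bar f$ commutes with every quotient $\Z/p\Z\to\Z/d\Z$ ($d\mid p$), which forces $\bar f^{-1}(R)$ to be invariant under $\stab(R)=\{t:R+t=R\}$; (b) the sublattice of $2^{\Z/p\Z}$ generated by the cyclic translates $R+c$ is exactly the family of $\stab(R)$-invariant subsets — quotienting by $\stab(R)$ reduces this to the case $\stab(R)=\{0\}$, where $\{r\}=\bigcap_{c\in r-R}(R+c)$ for each $r$ (when $R\neq\emptyset$; if $R=\emptyset$ then $\bar f^{-1}(R)=\emptyset$ and $P_\emptyset=\emptyset\in\mathcal{L}(L)$ trivially), so the sublattice is all of $2^{\Z/p\Z}$. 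Combining (a) and (b), $\bar f^{-1}(R)=\bigcup_i\bigcap_{c\in S_i}(R+c)$ for suitable finite data, and choosing $k_c\ge N$ with $k_c\equiv-c\pmod p$ gives $P_{\bar f^{-1}(R)}=\bigcup_i\bigcap_{c\in S_i}(L-k_c)\in\mathcal{L}(L)$.

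The remaining task — correcting $P_{\bar f^{-1}(R)}$ on a finite subset of $[0,N)$ to obtain $f^{-1}(L)$, while staying inside $\mathcal{L}(L)$ — is the step I expect to be the main obstacle. The extra ingredients are the finitely many ``junk-carrying'' decrements $L,L-1,\dots,L-(N-1)$, which agree with $P_{R-k}$ beyond $N$ but carry prefixes reflecting $L\cap[0,2N)$. Condition~(i) is precisely what makes the correction possible: it confines the set of $n$ on which $f^{-1}(L)$ and $P_{\bar f^{-1}(R)}$ differ to $[0,N)$, and for each such $n$ it forces $f(n)<N$, hence $g(n)<N$, so that the junk-carrier $L-g(n)$ ``witnesses'' $n$ (indeed $n\in L-g(n)\iff f(n)\in L\iff n\in f^{-1}(L)$). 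What then remains is the bookkeeping to assemble, from these witnesses together with the purely periodic members of $\mathcal{L}(L)$ (which restore the residues that intersections with junk-carriers destroy), an explicit expression for $f^{-1}(L)$ as a finite union of finite intersections of the $L-k$; I expect this to involve distinguishing a few cases according to the prefix of $L$. Finally, the stated examples are immediate: the functions $f(x)=nx$ (whose inverse-image operation is division of $L$ by $n$) and $f(x)=x^n$ ($n$-th root of $L$) are non-decreasing, satisfy $f(x)\ge x$, and satisfy~(ii) — for the power map because $a-b\mid a^n-b^n$ — so condition~(2) holds for them.
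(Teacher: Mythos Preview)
Your direction $(1)\Rightarrow(2)$ is correct and clean; the two counterexamples (a final segment for~(i), a full residue class for~(ii)) are essentially the paper's arithmetic-progression witnesses from its Theorem~5.1.

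The direction $(2)\Rightarrow(1)$, however, is genuinely incomplete --- and you say so yourself. You correctly prove the dichotomy for $g=f-\mathrm{id}$ (constant or $\to\infty$; incidentally, you never use the latter case hypothesis afterwards), and your stabiliser argument showing $P_{\bar f^{-1}(R)}\in\mathcal{L}(L)$ is sound. But the ``remaining task'' of correcting on $[0,N)$ is not bookkeeping: it is exactly where the difficulty lies. Knowing that $n\in L-g(n)\iff n\in f^{-1}(L)$ for each individual $n<N$ with $f(n)<N$ does not by itself tell you how to build a lattice expression that is correct \emph{simultaneously} at all points; each witness $L-g(n)$ misclassifies other small elements, and you have not indicated any mechanism to reconcile these. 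The case analysis you anticipate ``according to the prefix of $L$'' would have to produce, for each $a\in f^{-1}(L)$, a set in $\mathcal{L}(L)$ containing $a$ and contained in $f^{-1}(L)$ --- and that is precisely the content of the key lemma you are missing.

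The paper avoids your entire periodic/stabiliser analysis with a single identity (its Lemma~3.1):
\[
f^{-1}(L)\ =\ \bigcup_{a\in f^{-1}(L)}\ \bigcap_{n\in L-a}(L-n),
\]
valid for \emph{any} $L\subseteq\N$ once~(i) and~(ii) hold. The proof is a short minimal-counterexample argument: if $b\in\bigcap_{n\in L-a}(L-n)$ but $f(b)\notin L$, walk from $f(a)$ (or $f(b)$) in steps of $|a-b|$ and look at the first step where membership in $L$ changes; condition~(ii) guarantees such a walk reaches the other value, and the step before the change produces an $n\in L-a$ with $b\notin L-n$. Regularity then enters only to make the union and intersection finite, since $\{L-n:n\in\N\}$ is finite. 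This formula already handles the prefix and the periodic part uniformly, so no correction step is needed at all.
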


This problem, for finite sets and division by $n$,
was submitted to us by Jean-\'Eric Pin \& Zolt\'an \'Esik, \cite{EsikPin2011}.
Jean-\'Eric Pin \& Pedro Silva announce,
in the framework of profinite topologies and uniformly continuous fonctions,
a result related to our theorem \ref{thm:MAIN}  (see \cite{PinSilva2011,PinSilva}).

\smallskip
Any regular subset $L$ of $\N$ is ultimately periodic
(cf. Lemma~\ref{l:ultimately periodic}).
For an arithmetic progression $L$, the fact that $f^{-1}(L)$
is a union of decrements of $L$ is an easy result
(cf. Proposition~\ref{p:progression}).
Difficulties arise with:
\begin{conditions}
\item
the finite set coming from
the grouping of arithmetic progressions which constitutes the
periodic part of $L$,
\item
the other finite set before periodicity
(these two finite sets are the sets $B$ and $A$ of
Proposition~\ref{thm:Myhill}).
\end{conditions}

Prior to the general result
(cf. Theorems~\ref{thm:main2to1} \& \ref{thm:main2}),
we prove particular instances, namely division by $n$ and $n$th root, which give a clearer insight to the proof
(cf. Theorems \ref{thm:quotient}
and \ref{thm:root}).

%
\subsection{Lattices closed under decrementation}
%
We recall some definitions and fix some notation.
\begin{definition}
A lattice $\+L$ over a set $X$ is any non empty family
of subsets of $X$ such that $L\cup M$ and $L\cap M$ are in $\+L$
whenever $L,M$ are in $\+L$.
\end{definition}

\begin{definition}
Let $L$ be a subset of $\N$, $i\in\N$ and $k\in\N\setminus\{0\}$.
The sets
$$
L-i = \{x\in\N \mid x+i\in L\}
\ , \
L\div k = \{x\in\N \mid kx \in L\}
\ , \
\sqrt[k]{L} = \{x\in\N \mid x^k\in L\}
$$
are respectively called the $i$-\emph{decrement},
$k$-\emph{quotient} and $k$-\emph{root} of $L$.
Observe that the $i$-decrement is defined as a subset of $\N$,
excluding negative integers.

Let ${\+D}(L)$ denote the family $\{L-i \mid i\in\N\}$
of decrements of $L$.
\end{definition}

\begin{example}\label{ex4N}
1) Let $L=\{5,6\}+4\N=\{5,6,9,10,13,14,\ldots\}$, then $L\div 2 = 3+2\N=\{3,5\}+4\N$.
Moreover, for any integer $x$, $x^2\equiv 0\pmod 4$ or $x^2\equiv 1\pmod 4$, hence
\begin{eqnarray*}
x^2\in \{5,6\}+4\N
&\iff& x^2\geq5\wedge x^2\equiv 1\pmod 4\\
&\iff& x\geq3\wedge x\equiv 1,3\pmod 4\\
&\iff& x\in \{3,5\}+4\N
\end{eqnarray*}
Hence also $\sqrt{L}=\{3,5\}+4\N=L\div 2\;$.

2)  Let $L=\{1,2\}+4\N$, then $L\div 3 = \{2,3\}+4\N$ and $\sqrt{L} =\{1,3\}+4\N$.
\end{example}

The following results are straightforward.
\begin{proposition}[Composing decrements]
\label{p:successive decrement}
$(L-i)-j=L-(i+j)$.
\end{proposition}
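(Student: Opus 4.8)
The plan is to unwind the definition of the decrement operation on both sides and verify that they pick out the same subset of $\N$, element by element. Recall that for a subset $M\subseteq\N$ and $k\in\N$ one has $M-k=\{y\in\N\mid y+k\in M\}$. So I would fix an arbitrary $x\in\N$ and compute: $x\in (L-i)-j$ means, by the definition applied with $M=L-i$, that $x+j\in L-i$; applying the definition a second time (now to the element $x+j\in\N$) this says $(x+j)+i\in L$. By associativity and commutativity of addition in $\N$ we have $(x+j)+i=x+(i+j)$, so this last condition is exactly $x\in L-(i+j)$. Since every step is an equivalence, reading the chain in either direction yields both inclusions, hence $(L-i)-j=L-(i+j)$.

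The only point that deserves a word of care — and the reason one cannot simply appeal to the corresponding identity for shifts of subsets of $\Z$ — is that the decrement is defined as a subset of $\N$, so one must check that no element is spuriously dropped by this restriction. But this is immediate: since $i,j\geq 0$, whenever $x\in\N$ we also have $x+j\in\N$ and $(x+j)+i=x+(i+j)\in\N$, so membership in each intermediate decrement depends only on whether the shifted element lies in $L$, never on a clash with negativity. Thus there is no real obstacle here: the statement is a one-line computation, and it records the basic fact that the family $\+D(L)=\{L-i\mid i\in\N\}$ is closed under taking further decrements.
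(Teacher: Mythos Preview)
Your proof is correct and is exactly the obvious unwinding of the definition; the paper itself merely declares the result ``straightforward'' and gives no further argument, so there is nothing to compare.
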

\begin{proposition}\label{p:sublattice}
For $L\subseteq\N$ let $\+L(L)$ be the family of
sets of the form $\bigcup_{j\in J}\bigcap_{i\in I_j}(L-i)$
where $J$ and the $I_j$'s are finite non empty subsets of $\N$.
Then the family $\+L(L)$ is the smallest sublattice of $\+P(\N)$
containing $L$ and closed under decrement.
\end{proposition}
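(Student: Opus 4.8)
The plan is to verify the two assertions separately: first that $\+L(L)$ is a sublattice of $\+P(\N)$ that contains $L$ and is closed under decrement, and then that it is contained in every family with these three properties; together these say that $\+L(L)$ is the smallest such family.

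For the first assertion, I would begin with the trivial observation that $L\in\+L(L)$: take $J=\{0\}$, $I_0=\{0\}$, so the corresponding set is $L-0=\{x\in\N\mid x\in L\}=L$; in particular $\+L(L)$ is nonempty. Closure under finite union is immediate: writing two members $\bigcup_{j\in J}\bigcap_{i\in I_j}(L-i)$ and $\bigcup_{j\in J'}\bigcap_{i\in I'_j}(L-i)$ over disjoint index sets $J,J'$ (rename if necessary), their union is indexed by the finite nonempty set $J\cup J'$ with the evident family of inner index sets. Closure under finite intersection is the one computation I would spell out: by distributivity of $\cap$ over $\cup$,
\[
\Big(\bigcup_{j\in J}\bigcap_{i\in I_j}(L-i)\Big)\cap\Big(\bigcup_{j'\in J'}\bigcap_{i\in I'_{j'}}(L-i)\Big)=\bigcup_{(j,j')\in J\times J'}\ \bigcap_{i\in I_j\cup I'_{j'}}(L-i),
\]
and $J\times J'$ is finite nonempty while every $I_j\cup I'_{j'}$ is finite nonempty, so the result lies in $\+L(L)$. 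For closure under decrement, I would use that $X\mapsto X-k$ is the preimage map under $h\colon\N\to\N$, $h(x)=x+k$ (the restriction to $\N$ being harmless), hence commutes with unions and intersections, together with Proposition~\ref{p:successive decrement}, which gives $(L-i)-k=L-(i+k)$; thus
\[
\Big(\bigcup_{j\in J}\bigcap_{i\in I_j}(L-i)\Big)-k=\bigcup_{j\in J}\bigcap_{i\in I_j}\big(L-(i+k)\big)=\bigcup_{j\in J}\bigcap_{i'\in I_j+k}(L-i'),
\]
which has the required shape since each $I_j+k=\{i+k\mid i\in I_j\}$ is again finite and nonempty. In particular $A-1\in\+L(L)$ for every $A\in\+L(L)$.

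For the second assertion, let $\+M$ be any sublattice of $\+P(\N)$ with $L\in\+M$ that is closed under decrement. Iterating the decrement operation $k$ times and invoking Proposition~\ref{p:successive decrement} to identify the iterate, one gets $L-i\in\+M$ for every $i\in\N$. Since $\+M$ is closed under finite intersections, $\bigcap_{i\in I_j}(L-i)\in\+M$ for every finite nonempty $I_j$; since it is closed under finite unions, $\bigcup_{j\in J}\bigcap_{i\in I_j}(L-i)\in\+M$ for every finite nonempty $J$. Hence $\+L(L)\subseteq\+M$. Combining this with the first assertion, $\+L(L)$ is a sublattice containing $L$ and closed under decrement that is contained in every other one, i.e.\ it is the smallest, as claimed.

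I do not expect any real obstacle, consistent with the paper's remark that these results are straightforward. The only point needing attention is the bookkeeping with the finite index families: after each operation the outer set $J$ and all inner sets $I_j$ must stay finite and, crucially, nonempty (an empty intersection would yield $\N$, which need not belong to $\+L(L)$), and for the union one should rename indices so that $J$ and $J'$ are disjoint. The distributive rewriting of the intersection is the single step I would present in full detail.
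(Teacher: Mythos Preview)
Your proof is correct and follows the same approach as the paper, which records only the single identity $\left(\bigcup_{j\in J}\bigcap_{i\in I_j}(L-i)\right)-k=\bigcup_{j\in J}\bigcap_{i\in I_j}\bigl(L-(i+k)\bigr)$ and leaves everything else implicit. You have simply spelled out the remaining routine verifications (membership of $L$, closure under $\cup$ and $\cap$, and minimality) that the paper takes for granted.
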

\begin{proof}
Observe that
$\left(\bigcup_{j\in J}\bigcap_{i\in I_j}(L-i)\right)-k
=\bigcup_{j\in J}\bigcap_{i\in I_j}\big(L-(i+k)\big)$.
\end{proof}

\subsection{Regular sets of natural integers}
%
\begin{definition}\label{d:periodic}
1. A set $L\subseteq\N$ is periodic with  period $r$ if,
for every $x$, $x\in L \Longrightarrow x+r\in L$.
\\
2. A set $L\subseteq\N$ is ultimately periodic with  period $r$ if
there exists $q\in\N$  such that
$L\cap\{x\mid x\geq q\}$  is periodic with  period $r$,
i.e. for every $x\geq q$, $x\in L \Longrightarrow x+r\in L$.
\end{definition}

As we here we work with a semigroup and not a group, namely $(\N,+)$,
the definition of periodicity is not given by an equivalence
$x\in L \Longleftrightarrow x+r\in L$ but by an implication
$x\in L \Longrightarrow x+r\in L$.

\smallskip
Regular subsets of $\N$ are subsets which are recognized by finite automata
in unary notation (cf. \cite{eilenberg}, pages 100--103).
Here, we will only use the following classical characterization of regular subsets
of $\N$ which goes back to Myhill, 1957 \cite{myhill}.
Recall that an arithmetic progression is a subset of $\N$ of the form $a+r\N$.

\begin{proposition}\label{thm:Myhill}
Let $L\subseteq N$.
The following conditions are  equivalent:
\begin{conditionsiii}
\item
$L$ is regular,
\item
$L$ is the union of a finite set with finitely many arithmetic progressions,
\item
$L=A \cup (q+B+r\N)$,
where $q\in\N$,  $r\in\N\setminus\{0\}$, $A\subseteq\{0,1,\ldots, \max(0,q-1)\}$
and $B\subseteq\{0,1,\ldots, r-1\}$.
\end{conditionsiii}
\end{proposition}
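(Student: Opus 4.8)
The plan is to establish the three conditions as equivalent by proving the cycle (iii)\,$\Rightarrow$\,(ii)\,$\Rightarrow$\,(i)\,$\Rightarrow$\,(iii); the only substantial step is the last implication, where I extract the normal form from a finite automaton. The direction (iii)\,$\Rightarrow$\,(ii) is immediate: the set $q+B+r\N$ is simply the finite union $\bigcup_{b\in B}(q+b+r\N)$ of arithmetic progressions, so $L=A\cup(q+B+r\N)$ is the union of the finite set $A$ with finitely many arithmetic progressions. For (ii)\,$\Rightarrow$\,(i) I invoke the standard closure properties of the class of regular subsets of $\N$: every finite set is regular; every arithmetic progression $a+r\N$ is regular (it is recognised by a unary automaton consisting of a tail of length $a$ feeding into a cycle of length $r$, with the appropriate accepting state); and regular sets are closed under finite union via the product construction. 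Hence any set of the form described in (ii) is regular.

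The core of the argument is (i)\,$\Rightarrow$\,(iii). I would fix a deterministic finite automaton recognising $L$ over the one-letter alphabet, with finite state set $Q$, start state $s_0$, and single-letter transition regarded as a map $\delta:Q\to Q$. Reading the unary encoding of $n$ amounts to applying $\delta$ exactly $n$ times, so $n\in L$ iff the state $s_n:=\delta^n(s_0)$ is accepting. The key observation is that the orbit $s_0,s_1,s_2,\dots$ is eventually periodic: since $Q$ is finite, by the pigeonhole principle there exist $i<j\le|Q|$ with $s_i=s_j$, and taking the least such pair yields a threshold $q:=i$ and a period $r:=j-i>0$ satisfying $s_{n+r}=s_n$ for all $n\ge q$. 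Consequently, for $n\ge q$ one has $s_n=s_{q+((n-q)\bmod r)}$, so whether $n\in L$ depends only on $(n-q)\bmod r$.

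I then read off the required data. Set $A=L\cap\{0,1,\dots,\max(0,q-1)\}$, the accepted values below the threshold, which is by construction a subset of $\{0,1,\dots,\max(0,q-1)\}$; and set $B=\{b\in\{0,1,\dots,r-1\}\mid s_{q+b}\text{ is accepting}\}$. The stabilisation above gives, for $n\ge q$, that $n\in L$ iff $(n-q)\bmod r\in B$, i.e. iff $n\in q+B+r\N$; together with the finitely many small values collected in $A$ this yields $L=A\cup(q+B+r\N)$ exactly in the form of (iii). A little care is needed for the degenerate case $q=0$, where the orbit is purely periodic and $A=\emptyset$, which is precisely why the stated range is written with $\max(0,q-1)$.

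The main obstacle I anticipate is purely in the bookkeeping of this last step: one must verify that the extracted $q$ and $r$ genuinely fit the prescribed ranges $A\subseteq\{0,\dots,\max(0,q-1)\}$ and $B\subseteq\{0,\dots,r-1\}$, and observe that one may freely enlarge $q$ or replace $r$ by a positive multiple without leaving the normal form. Everything else—the closure properties underlying (ii)\,$\Rightarrow$\,(i) and the pigeonhole periodicity underlying (i)\,$\Rightarrow$\,(iii)—is routine once the automaton is in hand.
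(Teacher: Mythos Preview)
Your argument is correct and is the standard proof of this classical characterisation. However, there is nothing to compare it to: the paper does not supply its own proof of this proposition. It is stated as a known result, attributed to Myhill~1957 with a pointer to Eilenberg for background, and then used as a black box throughout. So your write-up would serve as a self-contained justification the paper chose to omit, rather than an alternative to anything the authors wrote.
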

Observe that in case $B=\emptyset$, the set $A \cup (q+B+r\N)$
reduces to the finite set $A$.
The following lemmas will be useful.

\begin{lemma}\label{l:ultimately periodic}
Any regular set is ultimately periodic
and its family of decrements is finite.

More precisely, suppose $L=A \cup (q+B+r\N)\subseteq\N$
where $q\in\N$, $r\in\N\setminus\{0\}$,
$A\subseteq\{0,1,\ldots, q-1\}\cap\N$,
and $B\subseteq\{0,1,\ldots, r-1\}$.
Then
\begin{conditions}
\item
$\forall x\geq q \  (x\in L  \Longleftrightarrow x+r\in L)$
\item
The family $\+D(L)$ of  decrements of $L$ is equal to
$\{L-i \mid 0\leq i < q+r\}$.
\end{conditions}
\end{lemma}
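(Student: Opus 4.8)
\textbf{Proof plan for Lemma~\ref{l:ultimately periodic}.}

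The plan is to prove the two numbered assertions directly from the normal form $L=A\cup(q+B+r\N)$ given by Proposition~\ref{thm:Myhill}, and then read off the two headline statements (ultimate periodicity and finiteness of $\+D(L)$) as immediate consequences.

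For (1), I would argue by set inclusion in the equivalence. Fix $x\geq q$. Since $A\subseteq\{0,\ldots,q-1\}$, no element of $A$ is $\geq q$, so for $x\geq q$ we have $x\in L\iff x\in q+B+r\N$, and likewise $x+r\geq q$ gives $x+r\in L\iff x+r\in q+B+r\N$. Thus it suffices to check $x\in q+B+r\N\iff x+r\in q+B+r\N$ for $x\geq q$; the forward direction is trivial (add $r$ to the $r\N$-part), and the backward direction holds because if $x+r=q+b+kr$ with $b\in B$ and $k\in\N$, then since $x\geq q\geq 0$ we get $k\geq 1$ (as $x+r\geq q+r>q+b$ forces $kr\geq r$, wait — more carefully, $x+r=q+b+kr\geq q+r$ gives $b+kr\geq r$, and since $0\leq b\leq r-1$ this forces $k\geq 1$), hence $x=q+b+(k-1)r\in q+B+r\N$. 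This establishes the periodicity implication in both directions for $x\geq q$.

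For (2), one inclusion is trivial: $\{L-i\mid 0\leq i<q+r\}\subseteq\+D(L)$. For the reverse, I would show that for every $i\geq q+r$ there is some $i'$ with $q\leq i'<q+r$ and $L-i=L-i'$; combined with Proposition~\ref{p:successive decrement} this proves every decrement equals one with index $<q+r$. The natural choice is $i'=i-\lfloor (i-q)/r\rfloor\, r$, so that $i'\equiv i\pmod r$ and $q\leq i'<q+r$. To see $L-i=L-i'$, note that by part (1), for any $y\geq 0$ and any $j\geq q$ we have $y+j\in L\iff y+j+r\in L$ (apply (1) with $x=y+j\geq q$), hence iterating, $y+i\in L\iff y+i'\in L$ since $i$ and $i'$ differ by a multiple of $r$ and both the smaller shifts $y+i'$ and all intermediate values are $\geq q$. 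This gives $L-i=L-i'$. The set $\{L-i\mid 0\leq i<q+r\}$ is finite, so $\+D(L)$ is finite, and ultimate periodicity is exactly part (1) with witness $q$ and period $r$.

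The only mildly delicate point is the backward direction of (1) — checking that subtracting $r$ keeps one inside $q+B+r\N$ rather than falling into the ``gap'' below $q$ — which is handled by the inequality $x\geq q$ forcing the $r\N$-coefficient to be positive; everything else is bookkeeping. I would present (1) first and then use it as the engine for (2).
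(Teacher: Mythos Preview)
Your proposal is correct and follows essentially the same approach as the paper: both arguments reduce (1) to analyzing membership in $q+B+r\N$ for $x\geq q$, and both obtain (2) by iterating (1) to show that any decrement $L-i$ with $i\geq q$ coincides with some $L-(q+i')$, $0\leq i'<r$. The only cosmetic difference is that the paper parametrizes $x\geq q$ directly as $x=q+i+kr$ with $0\leq i<r$, which makes both directions of (1) the single equivalence $x\in L\iff i\in B\iff x+r\in L$ and avoids the small case analysis you do for the backward direction.
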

\begin{proof}
(1) \  Let $x\geq q$,
so that $x=q+i+kr$ for some $0\leq i < r$, $k\geq 0$.
Then $x\in L=A\cup(q+B+r\N)\iff q+i+kr\in q+B+r\N\iff i\in B$.
Similarly, $x+r\in L\iff q+i+(k+1)r\in q+B+r\N\iff i\in B$.
Thus, $x\in L\iff x+r\in L$.
\\
(2) \ Let $j\geq q$. Then $j=q+i+kr$ for some $0\leq i < r$, $k\geq 0$.
For any $x\in\N$, we have
$x\in L-j \Leftrightarrow x+j\in L
\Leftrightarrow x+q+i+kr\in L
\Leftrightarrow x+q+i\in L
\Leftrightarrow x\in L-(q+i)$,
the third in place of equivalence being obtained by applying $k$ times point $(1)$.
\end{proof}

\begin{example}(Example \ref{ex4N} continued)\label{ex:decrement}
1) For $L=\{5,6\}+4\N$, the set $\+D(L)$ consists of 7 sets $L, L-1=\{4,5\}+4\N,\ldots, L-5=\{0,1\}+4\N, L-6=\{0,3\}+4\N, L-7=\{2,3\}+4\N=L-3$.

2) If $L=\{1,2\}+4\N$, then $\+D(L)=\{L, \{0,1\}+4\N, \{0,3\}+4\N, \{2,3\}+4\N\}$.
\end{example}

In case of an arithmetic progression,
Proposition~\ref{p:sublattice} can be simplified.
\begin{lemma}\label{l:progression} 
Let $L=q+r\N$ be the range of an arithmetic sequence, $r>0$.
\\
1. The family $\+D(L)$ of decrements of $L$ is equal to
$$
\+D(L)=\{s+r\N \mid 0\leq s\leq\max(r-1,q)\}
=\{L-j \mid 0\leq j\leq\max(r-1,q)\}\ .
$$
2. The smallest lattice $\+L(L)$ containing $L$
and closed under decrement is equal to the family of sets
\begin{conditions}
\item[(i)]
$ \{A+r\N \mid A\subseteq \{0,\ldots,\max(r-1,q)\}$
if $r\geq 2$,
\item[(ii)]
$\{s+\N \mid 0\leq s\leq q\}$ if $r=1$.
\end{conditions}
In particular,
every nonempty set of $\+L(L)$ is a finite union
of decrements of $L$,
and the empty set is in $\+L(L)$   just in case $r\geq 2$
(obtained with $A=\emptyset$).
\end{lemma}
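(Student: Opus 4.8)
The plan is to prove the two parts of Lemma~\ref{l:progression} in sequence, starting from the general description of $\+L(L)$ in Proposition~\ref{p:sublattice} and using the ultimate periodicity established in Lemma~\ref{l:ultimately periodic}. First I would prove part~1. Set $m=\max(r-1,q)$. For the inclusion $\+D(L)\subseteq\{s+r\N\mid 0\le s\le m\}$: by Proposition~\ref{p:successive decrement} the decrements are exactly the sets $L-j$ for $j\in\N$, and a direct computation gives $L-j=\{x\in\N\mid x+j\ge q,\ x+j\equiv q\pmod r\}$. When $j\le q$ this is $(q-j)+r\N$; when $j>q$ one writes $j=q+i+kr$ with $0\le i<r$ and uses Lemma~\ref{l:ultimately periodic}(1) to get $L-j=L-(q+i)$, which has already been listed (note $q+i$ may exceed $m$ only when $q<r-1$, in which case $L-(q+i)$ equals $(r-i')+r\N$ for the appropriate residue, still of the required form with offset $\le r-1\le m$). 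Conversely every $s+r\N$ with $0\le s\le m$ is realized: if $s\le q$ take $j=q-s$; if $q<s\le r-1$ take $j=q+(r-(s-q)\bmod r)$ or argue directly that $s+r\N=L-j$ for a suitable $j<r$ using periodicity. This is the step I expect to require the most care, because of the bookkeeping distinguishing the cases $q\ge r-1$ and $q<r-1$ and making sure the offsets land in the stated range; the underlying idea, though, is routine.

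For part~2, I would split on the value of $r$. Case $r\ge 2$: By Proposition~\ref{p:sublattice}, $\+L(L)$ consists of the sets $\bigcup_{j\in J}\bigcap_{i\in I_j}(L-i)$ with $J$, $I_j$ finite nonempty. Each $L-i$ equals $s_i+r\N$ for some $s_i\le m$ by part~1, and an intersection $\bigcap_{i\in I_j}(s_i+r\N)$ is either empty (if the $s_i$ are not all congruent mod $r$) or equals $s+r\N$ for the common residue's largest representative $\le m$ — but since all $s_i\le m$ and they are congruent mod $r$ with $r\ge 2$ and $m\ge r-1$, actually if they are congruent they must be equal, so the intersection is $s_i+r\N$ itself or $\emptyset$. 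Taking unions then yields exactly the sets $A+r\N$ with $A\subseteq\{0,\dots,m\}$, with $A=\emptyset$ giving $\emptyset$. Conversely every such $A+r\N=\bigcup_{s\in A}(s+r\N)$ is a union of decrements of $L$ by part~1, hence lies in $\+L(L)$; and $\emptyset$ is obtained as the empty union, or concretely as $(L)\cap(L-1)$ since these are disjoint when $r\ge 2$.

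Case $r=1$: Here $L=q+\N$, and by part~1 the decrements are the sets $s+\N$ for $0\le s\le q$, which form a chain under inclusion: $0+\N\supseteq 1+\N\supseteq\cdots\supseteq q+\N$. Any finite intersection of members of a chain is the smallest one, and any finite union is the largest one, so $\+L(L)$ as described by Proposition~\ref{p:sublattice} collapses to exactly $\{s+\N\mid 0\le s\le q\}$; in particular the empty set does not occur, which accounts for the parenthetical remark. Finally I would record the concluding sentence: in both cases every nonempty member of $\+L(L)$ is visibly a finite union of decrements of $L$ (in the $r\ge 2$ case, $A+r\N=\bigcup_{s\in A}(s+r\N)$ with each $s+r\N\in\+D(L)$; in the $r=1$ case each member is itself a single decrement), and $\emptyset\in\+L(L)$ iff $r\ge 2$. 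The main obstacle throughout is purely the index arithmetic in part~1; once the normal form $L-j\in\{s+r\N:0\le s\le m\}$ is pinned down, parts~2(i) and 2(ii) follow by the elementary observations that congruent residues bounded by $m\ge r-1$ coincide, and that a chain is closed under finite unions and intersections.
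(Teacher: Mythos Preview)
Your approach is the same as the paper's: compute the decrements directly for Part~1, then for Part~2 use Proposition~\ref{p:sublattice} together with the observation that an intersection of decrements is either empty or again a single decrement. Two details need correction, however.

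In the converse direction of Part~1, your formula $j=q+\bigl(r-(s-q)\bmod r\bigr)$ is wrong: for $q<s\le r-1$ it evaluates to $2q+r-s$, whereas the correct choice is $j=q+(r-s)$. Indeed, for $0<i<r$ one has $L-(q+i)=\{x\mid x+i\in r\N\}=(r-i)+r\N$, so to realise $s+r\N$ you need $i=r-s$, giving $j=q+r-s$; this $j$ satisfies $0<j\le r-1=m$ as required.

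In Part~2(i), the assertion ``if they are congruent they must be equal'' fails whenever $q\ge r$: for instance with $q=5$, $r=3$ the offsets $0$ and $3$ are both $\le m=5$ and congruent modulo~$3$ yet distinct. The correct statement---and exactly what the paper uses---is that when the $s_i$ are all congruent modulo~$r$ one has $\bigcap_i(s_i+r\N)=(\max_i s_i)+r\N$, which is still of the required form since $\max_i s_i\le m$. (Your earlier phrase ``the common residue's largest representative $\le m$'' is not right either: you want the largest $s_i$ actually occurring, not the largest element of that residue class below~$m$.) With these two fixes your argument is complete and coincides with the paper's.
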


\begin{proof}
1. In case $j\leq q$ then $L-j=s+r\N$ with $s=q-j\leq q$.
If $j\geq q$, i.e. $j=q+i+kr$ with $0\leq i<r$ and $k\in\N$,
then $L-j=\{x\in\N\mid x+(q+i+kr)\in q+r\N\}
=\{x\in\N\mid x+i\in r\N\}=r\N-i$.
If $i=0$ then $L-j=r\N=L-q$.
If $0<i<r$ then $L-j=r\N-i=(r-i)+r\N=L-(q-(r-i))$.
\\
2. Observe that the intersection of two sets in the family $\+D(L)$
is either empty (possible in case $r\geq2$ only)
or equal to the smallest one.
Then apply Proposition \ref{p:sublattice}, noting that for $r=1$, $A+r\N=\min (A)+\N$.
\end{proof}
%
%
\section{Closure under quotient and root}\label{sec:quotient}

%
The following result was suggested for lattices of finite sets
by \'Esik \& Pin  \cite{EsikPin2011}.
\begin{theorem}\label{thm:quotient}
Any lattice of regular subsets of $\N$ which is closed under decrement
is also closed under $k$-quotient, for $k\in \N\setminus\{0\}$.
\end{theorem}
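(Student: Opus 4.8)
The plan is to reduce the problem to a statement about a single regular set and its decrements. Let $\+L$ be a lattice of regular subsets of $\N$ closed under decrement, and let $L\in\+L$. By Proposition~\ref{p:sublattice}, $\+L$ contains the lattice $\+L(L)$, so it suffices to show that $L\div k$ is a finite union of sets of the form $\bigcap_{i\in I}(L-i)$ — in fact I expect it to be a finite union of decrements of $L$, i.e. a member of $\+D(L)$'s Boolean-lattice closure, and more precisely a finite union of single decrements. So the crux is: \emph{express $L\div k$ as a finite union $\bigcup_{j}(L-j)$ for suitable $j\in\N$.}

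First I would normalize $L$ using Proposition~\ref{thm:Myhill}: write $L=A\cup(q+B+r\N)$ with $A\subseteq\{0,\dots,q-1\}$, $B\subseteq\{0,\dots,r-1\}$. The key algebraic observation is that multiplication by $k$ interacts well with periodicity: if $x\in L\div k$ and $x$ is large, then $kx\in q+B+r\N$, and replacing $x$ by $x+r$ replaces $kx$ by $kx+kr$, which stays in $q+B+r\N$ since $r\mid kr$. Hence $L\div k$ is itself ultimately periodic with period $r$ (one should check $x\ge q$ suffices, or at worst $x\ge q$ after adjusting). So $L\div k$ is regular, of the form $A'\cup(q'+B'+r\N)$ for appropriate finite data. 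The periodic part: $x\in q+B+r\N$ with $kx$ — here $kx\bmod r$ ranges over $\{kb'\bmod r\}$ as... more carefully, $kx\equiv c\pmod r$ for $c$ in the image of multiplication-by-$k$ restricted to residues whose $k$-multiple lands in $B\pmod r$; this is a union of residue classes mod $r$, hence a finite union of arithmetic progressions with period $r$. The finite part $A'$ handles small $x$ with $kx\in A$ or $kx$ in the pre-periodic portion of $q+B+r\N$.

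The decisive step is then to show that each arithmetic progression $s+r\N\subseteq L\div k$ appearing above is in fact a decrement of $L$, or a union of decrements of $L$. Here I would invoke Lemma~\ref{l:progression} and Lemma~\ref{l:ultimately periodic}(2): the decrements of $L$ are $\{L-i\mid 0\le i<q+r\}$, and since $L$ itself has period $r$, each nonempty decrement $L-i$ with $i\ge$ (pre-period) is an arithmetic progression of period $r$, namely a union over $B$-shifted residues. The claim to verify is that the residue classes mod $r$ occurring in $L\div k$ are exactly covered by unions of the residue classes occurring in the various decrements $L-i$; concretely, if $s+r\N\subseteq L\div k$, I need an $i$ with $L-i\supseteq s+r\N$ and $L-i\subseteq L\div k$, or else express $s+r\N$ as such a union. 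The match should come from: $x\equiv s\pmod r$ lies in $L\div k$ iff $kx\equiv ks\pmod r$ lies in $B\pmod r$ (for large $x$), and $L-i$ for $i\equiv -ks... $ — i.e. choosing $i$ so that adding $i$ shifts residue $s$ to residue $ks\bmod r$... but multiplication by $k$ is not a shift, so this requires care: one residue of $L\div k$ may need several decrements of $L$ to cover, which is fine since $\+L$ is closed under union.

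The main obstacle, as flagged in the roadmap, is the two finite sets: the pre-periodic part $A$ and the finite block $B$ organizing the progressions. For the periodic residues the argument above is essentially bookkeeping modulo $r$; but small integers $x$ with $kx$ falling into $A$, or into the ``transient'' part of $q+B+r\N$ before period $r$ kicks in, must be matched individually against finite decrements $L-i$ (which, being small-index decrements, can have arbitrary finite ``heads''). I expect one handles these by observing that $L\div k$ restricted to $\{x<q\}$ (or some explicit bound) is a finite set, and every finite subset $F$ of $\N$ that is contained in $L\div k$ and ``respects'' membership can be written as a union of decrements — more precisely, for each $a\in F$ one finds $i$ with $a+i\in L$ and $(L-i)\cap\{\text{relevant range}\}$ controlled; combined with the periodic part and closure under finite union and intersection, this yields $L\div k\in\+L(L)\subseteq\+L$. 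Verifying that this patching is consistent — that the decrements chosen for the finite part do not accidentally pull in unwanted periodic elements, which is repaired by intersecting with a suitable decrement that kills large elements — is the delicate point, and is presumably where the detailed case analysis of Theorem~\ref{thm:quotient}'s proof will go.
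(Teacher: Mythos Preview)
Your plan has a genuine gap. The expectation that $L\div k$ is a finite union of decrements of $L$ is false, and this is not a minor technicality but the heart of the difficulty. Take $L=\{5,6\}+4\N$ (Example~\ref{ex4N}): then $L\div 2=\{3,5\}+4\N$, whose residues modulo $4$ are $\{1,3\}$. Every decrement $L-i$ has residue set a \emph{shift} of $B=\{1,2\}$, namely one of $\{1,2\},\{0,1\},\{0,3\},\{2,3\}$; none of these is contained in $\{1,3\}$, so no decrement of $L$ sits inside $L\div 2$, and hence no union of decrements can equal $L\div 2$. Your residue-matching idea (``choosing $i$ so that adding $i$ shifts residue $s$ to residue $ks\bmod r$'') cannot work, precisely because multiplication by $k$ is not a shift modulo $r$. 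You do note near the end that intersections might be needed to ``repair'' overshoots, but you give no mechanism for selecting them, and the structural normal-form analysis you propose gives no hint of what the right intersections are.

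The paper's proof avoids all structural analysis of $L$. For $k=2$ it sets, for each $a$,
\[
J_a \;=\; (L-a)\ \cap\ \bigcap_{i\in L-a}(L-i),
\]
and checks two facts: if $2a\in L$ then $a\in J_a$ (trivially $a\in L-i$ for each $i\in L-a$, and $a\in L-a$ since $2a\in L$); and $J_a\subseteq L\div 2$ (if $b\in J_a$ then $b\in L-a$, so $b$ is one of the indices $i$, hence $b\in L-b$, i.e.\ $2b\in L$). Since there are only finitely many decrements, there are only finitely many distinct $J_a$, and $L\div 2=\bigcup_{a\in L\div2}J_a\in\+L$. The passage from $k$ to $k+1$ replaces $L-a$ by $(L-a)\div k$ and $L-i$ by $L-ki$. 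The key idea you are missing is this diagonal trick: the intersection over $i\in L-a$ forces any $b\in J_a$ to satisfy $b\in L-b$, which is exactly $2b\in L$.
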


\begin{proof} 
The case $k=1$ is trivial.
We prove the theorem by induction on $k\geq1$.
For pedagogical reasons, we explicit the case $k=2$.
\smallskip\\
{\it Case $k=2$.}
Consider some $L\in\+L$ and let
$J_a=(L-a)\cap\bigcap_{i \in L-a} (L-i)$
for any $a\in\N$.
By Lemma~\ref{l:ultimately periodic}),
there are finitely many distinct sets $(L-i)$'s,
so that $J_a$ is a finite intersection of decrements of $L$.
The assumed closure properties of $\+L$ insure that $J_a\in\+L$.

In case $a\in L\div 2$, i.e. $2a\in L$,
the following properties are true.
\begin{conditions}
\item
$a\in J_a$.
In fact, $a\in L-i$ for any $i\in L-a$
and $a\in L-a$ because $2a\in L$.
\item
$J_a\subseteq L\div 2$.
Indeed, if $b\in J_a$ then $b\in L-a$
and $b$ is in all the ($L-i$)'s, for $i\in L-a$.
Letting $i=b$, we get
$b\in L-b$, i.e $2b\in L$ and $b\in L\div 2$.
\end{conditions}
Since there are finitely many $L-a$'s,
there are finitely many $J_a$'s.
Using closure under finite union, we see that
$K=\bigcup_{a\in L\div 2} J_a$ is in $\+L$.
Clearly, $K=L\div 2$ because each element  $a\in L\div 2$ is in $J_a$ and each $J_a$ is included in $L\div 2$.
\smallskip
\\
{\it Inductive case.}
Assuming $\+L$ is closed under $k$-quotient,
we prove that it is closed under $(k+1)$-quotient.
For $L\in\+L$, set
$J_a=\left((L-a)\div k\right)\cap\bigcap_{i \in (L-a)\div k} (L-ki)$
for any $a\in\N$.
By Lemma~\ref{l:ultimately periodic},
there are finitely many distinct $(L-i)$'s,
so that $J_a$ is a finite intersection of decrements of $L$ and of
a $k$-quotient of $L$.
The assumed closure properties of $\+L$ and induction hypothesis
insure that $J_a\in\+L$.

In case $a\in L\div (k+1)$, i.e. $(k+1)a\in L$,
the following properties are true.
\begin{conditions}
\item
$a\in J_a$.
In fact, $a\in L-ki$ for any $i\in (L-a)\div k$.
Also, since $(k+1)a\in L$, we have $ka\in L-a$
hence $a\in (L-a)\div k$.
\item
$J_a\subseteq L\div (k+1)$.
If $b\in J_a$ then $b\in(L-a)\div k$
and $b$ is in all the $L-ki$'s, for $i\in(L-a)\div k$.
Letting $i=b$, we get $b\in L-kb$, i.e $(k+1)b\in L$ and $b\in L\div (k+1)$.
\end{conditions}
Since there are finitely many ($L-a$)'s,
there are finitely many $(L-a)\div k$'s hence finitely many $J_a$'s.
Using closure under finite union, we see that the set
$K=\bigcup_{a\in L\div (k+1)} J_a$ is in $\+L$.
Clearly, $K=L\div (k+1)$ because
each element  $a\in L\div (k+1)$ is in $J_a$
and each $J_a$ is included in $L\div (k+1)$.
\end{proof} 
\begin{theorem}\label{thm:root}
Any lattice of regular subsets of $\N$ which is closed under decrement
is also closed under $k$-root, for $k\in \N\setminus\{0\}$.
\end{theorem}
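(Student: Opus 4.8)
The plan is to mimic the structure of the proof of Theorem~\ref{thm:quotient} (closure under $k$-quotient), replacing multiplication by $k$ with the raising-to-the-$k$th-power operation throughout. The key algebraic observation making this work is that, just as $2a\in L$ is equivalent to $a\in L-a$, we have for any $k$ and any $a$ the equivalence $a^{k+1}\in L \iff a^k\in L-a$ (since $a^{k+1}=a^k+a(a^k-1)\cdot\frac{a^k-1}{a-1}$... more directly: $a^{k+1} = a\cdot a^k$ need not help, but $a^{k+1}\in L$ iff $a^{k+1}-a\in L-a$, and one checks $a^{k+1}-a = a(a^k-1)$ is a multiple of... ) — actually the cleanest route is: $b^{k+1}\in L$ iff $b^k\in L-b$ when $b^{k+1}-b^k$ is added appropriately; since $b^{k+1}=b^k\cdot b$, we simply note $b^{k+1}\in L \iff b^k \in \{x \mid x + (b^{k+1}-b^k) \in L\}$, and $b^{k+1}-b^k = b^k(b-1)$, so this is $b^k \in L-(b^k(b-1))$ when $b^k(b-1)\ge 0$, which always holds. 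Thus one should take decrements by the correct nonnegative integer.

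Concretely, I would argue by induction on $k$, with $k=1$ trivial. Assuming $\+L$ is closed under $k$-root, fix $L\in\+L$ and for each $a\in\N$ set
\[
J_a=\bigl((L-a)\text{-something}\bigr)\cap\bigcap_{i}\,(L-\cdots),
\]
patterned on the quotient proof: let $J_a = \sqrt[k]{L-(a^{k+1}-a^k)}\cap\bigcap_{i\in \sqrt[k]{L-(a^{k+1}-a^k)}}(L-(i^{k+1}-i^k))$, using that $\sqrt[k]{L-(a^{k+1}-a^k)}$ is obtained from $L$ by first decrementing (finitely many distinct decrements by Lemma~\ref{l:ultimately periodic}) and then applying a $k$-root, so it lies in $\+L$ by the induction hypothesis; and $J_a$ is then a finite intersection (finiteness again from Lemma~\ref{l:ultimately periodic}) of sets in $\+L$, hence in $\+L$. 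One then checks, exactly as before, that if $a\in\sqrt[k+1]{L}$ then $a\in J_a$ and $J_a\subseteq\sqrt[k+1]{L}$: membership $a\in J_a$ because $a^{k+1}\in L$ gives $a^k\in L-(a^{k+1}-a^k)$, i.e. $a\in\sqrt[k]{L-(a^{k+1}-a^k)}$, and then $a\in L-(i^{k+1}-i^k)$ for each relevant $i$ follows since $a$ lies in that $k$-root set; and inclusion $J_a\subseteq\sqrt[k+1]{L}$ by plugging $i=b$ for $b\in J_a$ to get $b^k\in L-(b^{k+1}-b^k)$, i.e. $b^{k+1}\in L$. Finally $\sqrt[k+1]{L}=\bigcup_{a\in\sqrt[k+1]{L}}J_a$ is a finite union (finitely many distinct decrements, hence finitely many $J_a$) of members of $\+L$, so it is in $\+L$.

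I expect the main subtlety — and the place to be careful — to be bookkeeping with the decrement amounts: the quotient proof decrements by multiples $ki$, whereas here the correct decrement is the nonlinear quantity $i^{k+1}-i^k$ (equivalently, one writes $b^{k+1}\in L$ as $b^k\in L-(b^{k+1}-b^k)$). One must verify these amounts are genuinely nonnegative integers (clear, since $b\ge 1$ forces $b^{k+1}\ge b^k$, and for $b=0$ the amount is $0$), and that only finitely many distinct such decrements occur, which is immediate from Lemma~\ref{l:ultimately periodic} because there are only finitely many decrements of $L$ at all. With that settled, the combinatorial skeleton transfers verbatim from Theorem~\ref{thm:quotient}.
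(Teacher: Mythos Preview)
Your construction has a genuine gap: the crucial symmetry that makes the quotient proof work is lost. In Theorem~\ref{thm:quotient} the argument hinges on the fact that ``$i\in(L-a)\div k$'' and ``$a\in L-ki$'' are \emph{the same condition}, namely $a+ki\in L$. Your analogue does not have this property:
\[
i\in\sqrt[k]{L-(a^{k+1}-a^k)}\ \Longleftrightarrow\ i^k+a^{k+1}-a^k\in L,
\qquad
a\in L-(i^{k+1}-i^k)\ \Longleftrightarrow\ a+i^{k+1}-i^k\in L,
\]
and these two statements are unrelated in general. So the step ``$a\in L-(i^{k+1}-i^k)$ for each relevant $i$ follows since $a$ lies in that $k$-root set'' is simply false. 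A concrete counterexample already at $k=1$: take $L=\{4,5\}$, so $\sqrt{L}=\{2\}$; with $a=2$ your formula gives $J_2=(L-2)\cap\bigcap_{i\in L-2}(L-(i^2-i))=\{2,3\}\cap(L-2)\cap(L-6)=\emptyset$, hence $\bigcup_{a\in\sqrt{L}}J_a=\emptyset\neq\sqrt{L}$. The inclusion $J_a\subseteq\sqrt[k+1]{L}$ is also broken for $k\geq2$: plugging $i=b$ yields $b\in L-(b^{k+1}-b^k)$, i.e.\ $b+b^{k+1}-b^k\in L$, not $b^{k+1}\in L$.

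The paper's remedy is exactly the missing idea: rather than decrementing by the nonlinear amount $i^{k+1}-i^k$, it replaces decrements by \emph{quotients}, which are already in $\+L$ by Theorem~\ref{thm:quotient}. Writing the inductive step as
\[
J_a=\sqrt[k]{\,L\div a\,}\ \cap\ \bigcap_{i\in\sqrt[k]{L\div a}}(L\div i^k),
\]
the needed symmetry is restored because $i\in\sqrt[k]{L\div a}\Leftrightarrow a\,i^k\in L\Leftrightarrow a\in L\div i^k$, and setting $i=b$ gives $b^{k+1}\in L$ directly. Finiteness of the intersection and union then comes from finiteness of $\{L\div n\mid n\geq1\}\subseteq\+L(L)$.
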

\begin{proof}
Adapt the above proof:
substitute $\times$ and division for + and subtraction, so that
$L-i$ becomes $L\div i$.
In the argument,
finiteness of the family $\{L-i\mid i\in\N\}$ is replaced by that of
$\{L\div k\mid k\in\N\setminus\{0\}\}$ which holds since,
by Lemma \ref{l:ultimately periodic} and Proposition \ref{p:sublattice},
$\+L(L)$ is always finite when $L$ is regular. 
\end{proof}

\begin{example}(Examples \ref{ex4N} and \ref{ex:decrement} continued)\label{ex:sqrt4N}
 If $L=\{1,2\}+4\N$, then $L\div 3=\{ 2+4\N\}\cup\{ 3+4\N\}=L-3$ and
$\sqrt{L} =\{1,3\}+4\N=(L-5)\cup(L-3)$.
 
For $L=\{5,6\}+4\N$, we have $L\div 2=\sqrt{L} =\{3,5\}+4\N=\big((L-2)\cap(L-3)\big)\cup\big(L\cap (L-1)\;\big)$.
\end{example}

\section{More induced closures}\label{sec:main}
%
We extend closure under quotient (cf. Theorem~\ref{thm:quotient})
and under $n$-root (cf. Theorem~\ref{thm:root})
to a more general class of functions $f:\N\to\N$.
Given a regular set $L\subseteq\N$ and $n\in\N$, 
the set $L - n = \{x\in\N \mid x + n \in L\}$ is regular.
Also, by Lemma~\ref{l:ultimately periodic},  the family $\{L - n | n \in\N\}$ is finite.

\begin{lemma}\label{l:main2to1}
For any set $L \subseteq \N$ and for any function $f :\N\to\N$ such
that $f(x)-f(y) \in (x-y)\N$ for every $x, y \in \N$,
and such that $f(x) \geq x$ for every $x \in\N$,
we have:
\begin{equation}\label{eq:f-1L}
f^{-1}(L) \ =\ \bigcup_{a\in f^{-1}(L)}\left(\bigcap_{n\in L-a} L-n\right)
\end{equation}
\end{lemma}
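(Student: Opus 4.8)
The identity \eqref{eq:f-1L} is a set equality, so the plan is to prove the two inclusions separately, and the key device is the same ``diagonal'' trick used in the proofs of Theorems~\ref{thm:quotient} and~\ref{thm:root}: substitute $i=b$ inside the intersection. Write $J_a=\bigcap_{n\in L-a}(L-n)$ for each $a\in f^{-1}(L)$, so the right-hand side is $\bigcup_{a\in f^{-1}(L)}J_a$. Note that $L-a$ is never empty here: since $a\in f^{-1}(L)$ we have $f(a)\in L$, and by hypothesis $f(a)\geq a$, say $f(a)=a+n_0$ with $n_0\in\N$; then $n_0\in L-a$, so the intersection defining $J_a$ is over a nonempty finite index set (finiteness coming from Lemma~\ref{l:ultimately periodic}), hence is a genuine subset of $\N$.

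\textbf{The inclusion $\supseteq$.} I would show $J_a\subseteq f^{-1}(L)$ for every $a\in f^{-1}(L)$. Take $b\in J_a$; then $b\in L-n$ for every $n\in L-a$. In particular, since $f(a)=a+n_0$ with $n_0\in L-a$ (as above, using $f(a)\in L$ and $f(a)\geq a$), we want to land on $f(b)$. Here is where the divisibility hypothesis enters: $f(b)-f(a)=(b-a)m$ for some integer $m$, so $f(b)=f(a)+(b-a)m=a+n_0+(b-a)m$. I need $f(b)=b+n$ for some $n\in L-a$, i.e.\ $n=f(b)-b\geq 0$ (true because $f(b)\geq b$) and $a+n=f(b)\in L$; but $a+n=f(b)=f(a)+(b-a)m$, and since $f(a)\in L$ and $L$ is ultimately... no --- more directly: set $n=f(b)-b$. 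Then $n\in\N$ since $f(b)\geq b$, and $b\in L-n$ would say $f(b)\in L$, which is what we want to conclude, so this is circular. The correct route: I must instead pick the index $n$ more cleverly so that $b\in L-n$ is \emph{given} and forces $f(b)\in L$. Observe $f(b)=f(a)+(b-a)m$; and $f(a)\in L$. If $m=0$ then $f(b)=f(a)\in L$ and we are done. If $m\neq 0$, I use that $b\in L-n$ holds for \emph{all} $n\in L-a$, and I want some such $n$ with $b+n=f(b)$, i.e.\ $n=f(b)-b=f(a)-b+(b-a)m=(f(a)-a)+(b-a)(m-1)\cdot\ldots$ --- the clean statement is: $n:=f(b)-b$ satisfies $b+n=f(b)$, and $a+n=f(b)-(b-a)=f(a)+(b-a)(m-1)$; one then checks $a+n\in L$ using that $f(a)\in L$ together with periodicity of $L$ modulo $(b-a)$-steps, exactly as in Lemma~\ref{l:ultimately periodic}(1). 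I will need to confront the ``small arguments'' case where the periodicity implication $x\in L\Rightarrow x+r\in L$ does not run backwards; the hypothesis $f(x)\geq x$ is what keeps all the relevant arguments above the threshold $q$, so the implications go the right way. This case analysis --- ensuring $n\geq 0$ and that the period-$(b-a)$ shift from $f(a)\in L$ to $a+n\in L$ stays in the ultimately-periodic regime --- is the main obstacle.

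\textbf{The inclusion $\subseteq$.} This is the easy half: given $a\in f^{-1}(L)$, I show $a\in J_a$, whence $a$ belongs to the union. Indeed $a\in L-n$ means $a+n\in L$; for $n\in L-a$ we have $a+n\in L$ by definition of $L-a$ --- wait, that is exactly the condition, so $a\in L-n$ for every $n\in L-a$ is immediate, giving $a\in\bigcap_{n\in L-a}(L-n)=J_a$. Hence $f^{-1}(L)\subseteq\bigcup_{a\in f^{-1}(L)}J_a$. Combining the two inclusions yields \eqref{eq:f-1L}. I would close by remarking that the finiteness of $\{L-n\mid n\in\N\}$ (Lemma~\ref{l:ultimately periodic}) is what will later let us conclude, in the proof of the main theorem, that each $J_a$ and the whole union lie in any decrement-closed lattice containing $L$; but for the present lemma only the raw set identity is claimed, and no regularity or lattice structure is used beyond what makes the union well-defined.
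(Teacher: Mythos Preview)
Your inclusion $\subseteq$ is correct and matches the paper exactly: $a\in J_a$ because $n\in L-a$ means $a+n\in L$, i.e.\ $a\in L-n$.

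The inclusion $\supseteq$, however, has a genuine gap. You try to show directly that the particular index $n=f(b)-b$ lies in $L-a$, i.e.\ that $a+n=f(a)+(b-a)(m-1)\in L$, and for this you invoke ``periodicity of $L$ modulo $(b-a)$-steps, exactly as in Lemma~\ref{l:ultimately periodic}(1)''. But the present lemma is stated for an \emph{arbitrary} set $L\subseteq\N$, with no regularity hypothesis, so ultimate periodicity is simply not available (and you yourself say at the end that ``no regularity \ldots\ is used''). Even if $L$ were regular, its period $r$ has nothing to do with the step size $b-a$, so Lemma~\ref{l:ultimately periodic}(1) would not give what you need. For the same reason your earlier claim that the index set $L-a$ is finite is unjustified here; that is harmless for the set identity (an arbitrary intersection of subsets of $\N$ is perfectly well-defined), but it shows the confusion.

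The paper avoids any periodicity by a ``walking'' minimal-counterexample argument. Assume $b\in J_a$ but $f(b)\notin L$, and write $f(a)-f(b)=k(a-b)$ with $k\in\N$. If $a<b$, walk from $f(a)\in L$ to $f(b)=f(a)+k(b-a)\notin L$ in steps of size $b-a$ and let $r\geq1$ be the first step at which you leave $L$. Set $n:=(f(a)-a)+(r-1)(b-a)\geq0$; then $a+n=f(a)+(r-1)(b-a)\in L$, so $n\in L-a$, hence $b\in L-n$, i.e.\ $b+n=f(a)+r(b-a)\in L$, contradicting the choice of $r$. The case $a>b$ is symmetric: walk from $f(b)\notin L$ toward $f(a)\in L$ and take the first step at which you \emph{enter} $L$. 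This stepping idea is what is missing from your proposal; once you have it, no structural hypothesis on $L$ is needed at all.
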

\begin{proof}
Let us first consider $a \in f^{-1}(L)$.
Notice that for every $n \in L-a$, we have $a+n \in L$ and thus $a \in L-n$.
We deduce that $a$ is in $\bigcap_{n\in L-a} L-n$ and the
inclusion $\subseteq$ is proved.

\smallskip
For the other inclusion, let $a \in f^{-1}(L)$ and $b \in\bigcap_{n\in L-a} L-n$.
By the assumption on $f$, there exists $k \in\N$ such that
$f(a)-f(b) = k(a-b)$. 
Assume by contradiction that $f(b)\notin L$.
Since $f(a) \in L$ we get $f(a) \neq f(b)$, and in particular $a \neq b$.

Assume first that $a < b$. We consider the minimal natural number $r \in\N$
such that $f(a) + r(b-a) \notin  L$.
Note that such a natural number exists since
$f(a) + k(b-a) = f(b) \notin  L$.
Moreover, since $f(a) \in L$ we get $r\geq 1$.
By minimality of $r$, we get $f(a) + (r-1)(b-a) \in L$.
Thus, $n + a \in L$ with $n = f(a) + r(b-a)-b$.
Since $f(a)\geq a$, we get $n\geq (r-1)(b-a)\geq 0$. 
Now $n + a \in L$ implies $n \in L-a$ and 
thus $b \in L-n$; hence $n+b=f(a) + r(b-a) \in L$,
contradicting the definition of $r$.

Assume next that $a > b$  
and consider the minimal natural number $r \in\N$
such that $f(b) + r(a-b) \in L$.
Again, such a natural number exists since $f(b) + k(a-b) = f(a) \in L$.
Moreover, since $f(b) \notin  L$, we get $r\geq 1$.
Let $n = f(b)-b + (r-1)(a-b)$.
Since $f(b)\geq b$ and $a-b\geq 0$ we get $n\geq 0$.
Moreover, as $n + a = f(b) + r(a-b) \in L$, we get $n \in L-a$.
Thus, $b \in L-n$ and we get $n + b \in L$. 
That means $n+b=f(b) + (r-1)(a-b) \in L$ which contradicts the minimality of $r$.

We have proved by contradiction that $f(b) \in L$.
Thus, $b \in f^{-1}(L)$ and we get the other inclusion.
\end{proof}
We  can now prove the $(2)\Rightarrow(1)$ implication 
of our main theorem~\ref{thm:MAIN}.
\begin{theorem}\label{thm:main2to1}
Let $f:\N\to\N$ be non decreasing and such that
(i) $f(a)\geq a$
and (ii) $f(a)-f(b)\equiv 0 \pmod{(a-b)}$
for all $a,b\in\N$.
Every lattice of regular subsets of $\N$ closed under decrement
is also closed under $f^{-1}$.
\end{theorem}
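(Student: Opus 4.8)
The plan is to read off $f^{-1}(L)\in\+L$ from the identity in Lemma~\ref{l:main2to1}, handling separately the degenerate situation in which $f^{-1}(L)$ happens to be empty. Throughout, fix a lattice $\+L$ of regular subsets of $\N$ that is closed under decrement, and fix $L\in\+L$. The first thing I would record is that, $L$ being regular, Lemma~\ref{l:ultimately periodic} makes the family $\{L-n\mid n\in\N\}$ finite, and every one of its members belongs to $\+L$ by closure under decrement; consequently any intersection $\bigcap_{n\in S}(L-n)$ over a \emph{nonempty} set $S\subseteq\N$ is genuinely a finite intersection of members of $\+L$ and hence lies in $\+L$, and so does any finite union of such sets.

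For the main case, assume $f^{-1}(L)\neq\emptyset$. The monotonicity of $f$ together with $f(a)\geq a$ and $f(a)-f(b)\equiv 0\pmod{a-b}$ yields $f(a)-f(b)\in(a-b)\N$ and $f(a)\geq a$, which are exactly the hypotheses of Lemma~\ref{l:main2to1}, so
\[
 f^{-1}(L)\ =\ \bigcup_{a\in f^{-1}(L)}\ \bigcap_{n\in L-a}(L-n).
\]
Two bookkeeping checks then finish this case. First, for each $a\in f^{-1}(L)$ the index set $L-a$ of the inner intersection is nonempty: since $f(a)\in L$ and $f(a)-a\in\N$ (as $f(a)\geq a$), we have $f(a)-a\in L-a$; hence each inner term is a nonempty finite intersection of members of $\+L$ and so belongs to $\+L$. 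Second, that inner term depends on $a$ only through the set $L-a$, of which there are finitely many by Lemma~\ref{l:ultimately periodic}; thus the outer union has only finitely many distinct terms, and, being nonempty, it is a finite union of members of $\+L$. Therefore $f^{-1}(L)\in\+L$.

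It remains to treat the case $f^{-1}(L)=\emptyset$, where all one needs is $\emptyset\in\+L$. Here I would argue that $L$ cannot be cofinite: if $\{x\mid x\geq N\}\subseteq L$, then every $a\geq N$ satisfies $f(a)\geq a\geq N$, so $f(a)\in L$ and $a\in f^{-1}(L)$, contradicting emptiness. Since $L$ is not cofinite, no $x$ satisfies $\{x,x+1,x+2,\dots\}\subseteq L$, that is $\bigcap_{n\in\N}(L-n)=\emptyset$; but this is a finite intersection of members of $\+L$ by the opening remark, so $\emptyset\in\+L$, and hence $f^{-1}(L)=\emptyset\in\+L$.

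The one step I expect to require care is precisely this last case: Lemma~\ref{l:main2to1} is vacuous when $f^{-1}(L)$ is empty, and a lattice need not contain $\emptyset$ a priori, so the argument has to \emph{force} $\emptyset$ into $\+L$ — which is where non-cofiniteness of $L$ (a consequence of $f(a)\geq a$) and the finiteness of the family of decrements come into play. Everything else is routine: the substantive input is Lemma~\ref{l:main2to1}, and after it one only has to observe that the unions and intersections in sight are finite and run over nonempty index sets.
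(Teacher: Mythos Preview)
Your proof is correct and follows the same route as the paper: invoke Lemma~\ref{l:main2to1} and then use Lemma~\ref{l:ultimately periodic} to see that the intersections and unions in \eqref{eq:f-1L} are finite. In fact you are more careful than the paper, which passes in silence over the two edge cases you single out (nonemptiness of each index set $L-a$, and the case $f^{-1}(L)=\emptyset$ where one must actually produce $\emptyset$ as a finite intersection of decrements); your handling of both is sound.
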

\begin{proof}
Let $\+L$ be a lattice of regular sets closed under decrement
and let $L\in\+L$. Consider the representation of $f^{-1}(L)$ given by
formula~\eqref{eq:f-1L} of Lemma~\ref{l:main2to1}.
In order to ensure that $f^{-1}(L)$ belongs to the lattice $\+L$, we have to show that
both the intersection and the union are finite:  
since $L$ is regular, the family $\{L - n | n \in\N\}$ is finite by Lemma~\ref{l:ultimately periodic}; 
this concludes the proof.
\end{proof}
\begin{remark}
Every non decreasing polynomial with integral coefficients
mapping $\N$ into $\N$
satisfies the conditions of Theorem~\ref{thm:main2to1}.
Thus, Theorems~\ref{thm:quotient} and \ref{thm:root}
are  consequences of Theorem~\ref{thm:main2to1}; 
their proof gives
 a first idea and a better understanding to prove the  more general Theorem~\ref{thm:main2to1}).
\end{remark}
%
%
\section{About arithmetic progressions}\label{sec:arith}

For arithmetic progressions we sharpen Theorem~\ref{thm:main2to1}
and give a simpler proof.
\begin{proposition}\label{p:progression}
Let $f:\N\to\N$ non decreasing  be such that for all $a,b\in\N$
(i) $f(a)\geq a$
and (ii) $f(a)-f(b)\equiv 0 \pmod{(a-b)}$.
For every arithmetic progression $L=q+r\N$, with $q,r\in\N$, $r\geq1$,
the following conditions hold:
\begin{conditions}
\item
$f^{-1}(L)$ is the union of at most $r$ decrements of $L$,
\item
the smallest lattice $\+L(L)$ closed under decrement and such that 
$L\in\+L$ is closed under $f^{-1}$.
\end{conditions}
\end{proposition}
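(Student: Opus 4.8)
\textbf{Proof plan for Proposition~\ref{p:progression}.}
The plan is to establish (1) by an explicit computation of $f^{-1}(q+r\N)$ as a finite union of arithmetic progressions all having period $r$ and all shifted by amounts bounded in terms of $q$ and $r$, and then to derive (2) immediately from (1) together with Lemma~\ref{l:progression}. First I would write $L=q+r\N$ and analyze, for each residue class $c\in\{0,1,\ldots,r-1\}$, the set $X_c=\{x\in\N\mid x\equiv c\pmod r,\ f(x)\in L\}$. The key observation is that condition (ii) forces $f$ to be ``$r$-periodic modulo $r$'' on each such class: if $x\equiv x'\pmod r$ then $r\mid (x-x')\mid (f(x)-f(x'))$, so $f(x)\equiv f(x')\pmod r$. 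Hence membership $f(x)\in q+r\N$ depends, modulo the size constraint, only on the residue $c$: either $f(c')\equiv q\pmod r$ for the representatives of class $c$ (in which case $X_c$ is cofinite in the class $c+r\N$) or no element of $c+r\N$ has its image in $L$ (in which case $X_c$ is finite, in fact bounded, using $f(a)\ge a$ and monotonicity). So $f^{-1}(L)$ is a finite union over at most $r$ ``surviving'' residue classes of sets of the form $(c+r\N)\cap\{x\ge t_c\}$ for suitable thresholds $t_c$, plus possibly finitely many small exceptional points; but by monotonicity of $f$ and $f(a)\ge a$, once $f(x)\ge q$ the image stays $\ge q$, so each surviving class contributes exactly a tail $s_c+r\N$ with $s_c\equiv c\pmod r$, and the total is a union of at most $r$ arithmetic progressions of period $r$.

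The remaining point for (1) is to check that each such progression $s_c+r\N$ is a \emph{decrement} of $L$, i.e. of the form $L-j$ for some $j$. This is exactly the content of Lemma~\ref{l:progression}(1): the family $\+D(L)$ of decrements of $L=q+r\N$ is precisely $\{s+r\N\mid 0\le s\le\max(r-1,q)\}$. So I would argue that the thresholds $s_c$ can be taken in the range $0\le s_c\le\max(r-1,q)$ — the shift never needs to exceed $q$ because the only obstruction to a residue class being ``immediately periodic'' inside $L$ comes from the initial segment $\{0,\ldots,q-1\}$, and once past $q$ every element $x\equiv c\pmod r$ with $f(x)\equiv q\pmod r$ lies in $f^{-1}(L)$. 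Hence each surviving class contributes some $L-j_c\in\+D(L)$, and $f^{-1}(L)=\bigcup_c (L-j_c)$ is a union of at most $r$ decrements of $L$, proving (1). (Alternatively, this could be deduced directly from Lemma~\ref{l:main2to1}: formula~\eqref{eq:f-1L} already exhibits $f^{-1}(L)$ as a union of intersections of decrements of $L$, and since intersections of two decrements of an arithmetic progression are again decrements or empty by the argument in the proof of Lemma~\ref{l:progression}(2), the whole expression collapses to a union of decrements; then one counts that among the $|\+D(L)|\le\max(r-1,q)+1$ possible decrements only those lying inside a surviving residue class appear, and at most one per class, giving the bound $r$. I expect the cleaner write-up to use this route, since Lemma~\ref{l:main2to1} does the real work.)

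For (2): by Lemma~\ref{l:progression}(2), every nonempty member of $\+L(L)$ is a finite union of decrements of $L$, and $\+L(L)$ contains $\emptyset$ when $r\ge2$. Since, by (1), $f^{-1}(L)$ is a finite union of decrements of $L$ (and is empty only in the degenerate case, which is covered when $r\ge 2$; when $r=1$, $L$ is a tail $q+\N$ and $f^{-1}(L)=\{x\mid f(x)\ge q\}$ is itself a tail $s+\N$ with $s\le q$ by monotonicity, hence a single decrement and in particular nonempty), we conclude $f^{-1}(L)\in\+L(L)$. More generally one must check closure of $\+L(L)$ under $f^{-1}$ on \emph{all} its members, not just on $L$: any $M\in\+L(L)$ is, by Lemma~\ref{l:progression}(2)(i)--(ii), itself an arithmetic progression with period dividing $r$ (for $r\ge2$, $M=A+r\N$; but note $A+r\N=\bigcup_{s\in A}(s+r\N)$ is a union of period-$r$ progressions, not a single one), so I would instead observe that $f^{-1}$ commutes with finite unions, $f^{-1}(\bigcup_i M_i)=\bigcup_i f^{-1}(M_i)$, reducing to the case of a single progression $s+r\N=L-j$, and then apply (1) to each such $L-j$ — noting $f^{-1}(L-j)$ need not equal a decrement of $L-j$ a priori, so here one genuinely reuses the computation of part (1) with $q$ replaced by the relevant shift. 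The main obstacle I anticipate is precisely this bookkeeping: making sure that applying $f^{-1}$ to a shifted progression $s+r\N$ (rather than to $L$ itself) still lands inside $\+L(L)$, which requires that the resulting shifts stay within $\max(r-1,q)$; this follows because $s\le\max(r-1,q)$ already and $f^{-1}$ of a period-$r$ progression only ever produces shifts bounded by $\max(r-1,\text{current shift})\le\max(r-1,q)$, but it deserves to be spelled out carefully.
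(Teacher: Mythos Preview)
Your overall strategy matches the paper's: show that $f^{-1}(L)$ is a union, over the surviving residue classes modulo $r$, of tails $s_c+r\N$, and then invoke Lemma~\ref{l:progression} to identify each tail as a decrement of $L$. Part (2) is also handled the same way in the paper --- decompose $K=A+r\N\in\+L(L)$ as $\bigcup_{a\in A}(a+r\N)$, apply (1) to each $a+r\N$, and check the resulting shifts stay in $\{0,\ldots,\max(r-1,q)\}$; your bookkeeping remark at the end is exactly the point the paper verifies.

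There is, however, a genuine gap in your justification of the key bound $s_c\le\max(r-1,q)$ in part (1). Your argument is that ``once past $q$ every element $x\equiv c\pmod r$ with $f(x)\equiv q\pmod r$ lies in $f^{-1}(L)$'' (using $f(x)\ge x\ge q$). That is true, but it only yields $s_c\le(\text{smallest element of }c+r\N\text{ that is }\ge q)$, and when $q\ge r$ this smallest element can be as large as $q+r-1>q=\max(r-1,q)$. So as stated your reasoning does not place $s_c+r\N$ inside $\+D(L)$. The paper closes this gap by a short contradiction argument: assuming $a=s_c>\max(r-1,q)$, one shows $f(a-r)\in L$. In the delicate case $q\ge r$, writing $q=i+kr$ with $0\le i<r$, one has $f(a-r)\equiv i\pmod r$ and $f(a-r)\ge a-r>q-r=i+(k-1)r$; these two facts together force $f(a-r)\ge i+kr=q$, hence $f(a-r)\in L$, contradicting the minimality of $a$. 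This step genuinely needs the interplay of the congruence (ii) with the inequality (i), not just (i) alone, and is the one place your sketch would fail as written.

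Your alternative route via Lemma~\ref{l:main2to1} would indeed give $f^{-1}(L)$ as a union of decrements (since intersections of decrements of $q+r\N$ collapse to a single decrement or $\emptyset$), but extracting the sharp bound ``at most $r$'' from that expression still requires essentially the residue-class analysis above.
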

\begin{proof}
(1) If $f(a)\in q+r\N$ then, using monotonicity of $f$ and property (ii),
for every $k\in\N$ there exists $\ell\in\N$
such that $f(a+kr)=f(a)+\ell r$ hence $f(a+kr)\in q+r\N$ and $a+r\N\subseteq f^{-1}(L)$.
Thus, $f^{-1}(L)=\bigcup_{a\in f^{-1}(L)}a+r\N$.
Now, if $a<b$ and $a\equiv b\ (\bmod\ r)$ then
$b+r\N\subseteq a+r\N$.
Hence the last equality can be rewritten
$f^{-1}(L)=\bigcup_{a\in M}a+r\N$
where $M$ picks the minimum element of
$f^{-1}(L)\cap (i+r\N)$ for each $i$ such that  $0\leq i<r$
and $f^{-1}(L)\cap (i+r\N)$ is nonempty.
In particular, $M$ has at most $r$ elements.

It remains to show that, for each $a\in M$,
the set $a+r\N$ is a decrement of $L$.
Using Lemma~\ref{l:progression}, this amounts to show that
$a\leq\max(r-1,q)$ for each $a\in M$.
Let $a\in M$, $a=\min\left(f^{-1}(L)\cap (i+r\N)\right)$
with $0\leq i<r$.
By way of contradiction, supposing $a>\max(r-1,q)$,
so that $a-r\in\N$,
we show that $f(a-r)\in L$.
\\
{\it Case $q< r$.} Then $\max(r-1,q)=r-1<a$.
Since $f(a)\in L$ we have $f(a)=q+kr$ for some $k\in\N$.
Using  property (ii), we get
$f(a) \equiv f(a-r) \equiv q \ (\bmod\ r)$.
Since $q<r$, this yields $f(a-r)=q+\ell r$ for some $\ell\in\N$
and thus $f(a-r)\in L$.
\\
{\it Case $q\geq r$.} Then $\max(r-1,q)=q$ and $a>q\geq r$.
Let $q=i+kr$ with $0\leq i<r$ and $k\geq1$.
As above, $f(a-r)\equiv f(a)\equiv i\ (\bmod\ r)$
hence $f(a-r)=i+\ell r$ for some $\ell\in\N$.
Now, $f(a-r)\geq a-r$ by (i) hence $i+\ell r\geq a-r>q-r=i+(k-1)r$
so that $\ell\geq k$.
Thus, $f(a-r)=i+\ell r=q-kr+\ell r= q+(\ell-k)r\in q+r\N=L$.
\\
In both cases, we have $f(a-r)\in L$,
contradicting the minimality of $a$ in the intersection of
its congruence class modulo $r$ with $f^{-1}(L)$.

\medskip
(2) By Lemma \ref{l:progression}
any set $K$ in $\+L(L)$ is of the form $K=A+r\N$ with
$A\subseteq \{0,\ldots,\max(r-1,q)\}$, 
hence $\forall a\in A\quad a\leq \max(r-1,q)$.
Then  $f^{-1}(K)=\cup_{a\in A}f^{-1}(a+r\N)$.
By (1), each $f^{-1}(a+r\N)$ is of the form $A_a+r\N$
with $A_a\subseteq\{0,\ldots,\max(r-1,a)\}$.
Hence $f^{-1}(K)=\cup_{a\in A}\big(A_a+r\N\big)
=(\cup_{a\in A}A_a)+r\N$, 
and $\cup_{a\in A}A_a$ is a subset of $\{0,\ldots,\max(r-1,q)\}$.
When $r\geq 2$, this concludes the proof  that $f^{-1}(K)\in  \+L(L)$
by Lemma \ref{l:progression} 2(i). 
If $r=1$, we must check also that $f^{-1}(K) \not= \emptyset$: 
indeed $K=a+\N$ by Lemma \ref{l:progression} 2(ii),
as $f(a)\geq a$ by hypothesis (i), $f(a)\in a +\N$ and
$a\in f^{-1}(K) $ which  is non empty;
this concludes the proof that $f^{-1}(K)\in  \+L(L)$ for the case $r=1$.
%
\end{proof}

\begin{remark}
The statement of Proposition~\ref{p:progression} is sharper than
that of Theorem~\ref{thm:main2to1} applied to arithmetic progressions.
In fact, the proof of Proposition~\ref{p:progression} also shows that,
for an arithmetic progression $L$, 
the lattice $\+L(L)$ is
the smallest join-semilattice containing $L$ and closed under decrement.
\end{remark}
\begin{remark}
The proof of Proposition~\ref{p:progression} cannot be extended to regular sets, not even to periodic sets. Let  $f\colon n\mapsto n^2$ and $L$ the periodic set
$\{0 ,4,8\}+3\N=\{0,3,4\}\cup(6+\N)$.
Then $ f^{-1}(L)=\sqrt{L} =\N\setminus\{1\}=L-4$ is a decrement of $L$.
However, this result cannot be obtained by the proof of Proposition~\ref{p:progression}
 because this proof relies on the fact that, whenever $a\in f^{-1}(L)$, then 
 $a+r\N$ is a decrement of $L$; here however, $2+3\N$ is not a decrement of $L$ and does not even belong to $\+L(L)$).
Indeed, $\+D(L)$ consists here of $L, L-1, L-2, L-3,L-4, L-5, L-6=\N$,
all of which are of the form $D_i\cup(6+\N)$ with $D_i$ a finite set. Thus, $2+3\N$ cannot be obtained by finite unions, intersections and decrements of such sets, all of which contain {\it all} the integers larger than 6.
\end{remark}
\begin{remark}
 Proposition~\ref{p:progression}
does not hold  for finite sets, nor  general regular sets, nor periodic sets: unions of decrements are not sufficient to obtain $ f^{-1}(L)$,  intersections are needed.
\\
Consider $f\colon n\mapsto n^2$.
Let $L=\{5,6\}+4\N$ (periodic); then  (cf. Example  \ref{ex:sqrt4N}) $L\div 2= \sqrt{L}=\{3,5\}+4\N=\big((L-2)\cap(L-3)\big)\cup\big(L\cap (L-1)\big)$
cannot be obtained as a union of decrements of $L$: in order to obtain 5, we must include either   $L,\ L-1,\ L-4$ or $L-5$, but each of these decrements contains numbers not
 in $\sqrt{L} $ (respectively 6, 4, 2 and 0) which must be excluded by a suitable intersection.
 
Let $L=\{1,2\}$ then $f^{-1}(L)=\{1\}$ ;
 the  decrements of $L$ are the sets
$\{1,2\},\{0,1\},\{0\},\emptyset$,
no union of which is $f^{-1}(L)$,
intersections are required to get $f^{-1}(L)$.


This is why the proof in both the general and the finite case does
exclude the elements which are not in $f^{-1}(L)$ by using carefully chosen intersections.
\end{remark}

%
\section{Characterizing induced closures}\label{sec:main2}
%
We characterize the functions $f:\N\to\N$ such that
closure under decrement yields closure under $f^{-1}$.

\begin{theorem}\label{thm:main2}
Let $f:\N\to\N$.
The following conditions are equivalent.
\begin{conditionsiii}
\item
Every lattice of regular subsets of $\N$ closed under decrement
is closed under $f^{-1}$.
\item
For every finite subset $L$  of $\N$, the lattice $\+L(L)$ is closed under $f^{-1}$.
\item  For every arithmetic progression $L=q+r\N$, $r>0$, 
the lattice $\+L(L)$ is closed under $f^{-1}$.
\item
The map $f$ is non decreasing and satisfies $f(a)\geq a$ and
$f(a)-f(b)\equiv 0 \pmod {(a-b)}$ 
for all $a,b\in \N$.
\end{conditionsiii}
\end{theorem}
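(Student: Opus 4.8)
The plan is to prove Theorem~\ref{thm:main2} by establishing the cycle of implications $(i)\Rightarrow(ii)\Rightarrow(iii)$, $(iii)\Rightarrow(iv)$, and $(iv)\Rightarrow(i)$, since together with the trivial observations this closes the loop. The implication $(iv)\Rightarrow(i)$ is already available: it is exactly Theorem~\ref{thm:main2to1}, so nothing new is needed there. Likewise $(i)\Rightarrow(ii)$ and $(i)\Rightarrow(iii)$ are immediate, because by Proposition~\ref{p:sublattice} the family $\+L(L)$ is itself a lattice of sets closed under decrement, and when $L$ is regular (finite sets and arithmetic progressions both being regular by Proposition~\ref{thm:Myhill}) every member of $\+L(L)$ is regular; hence hypothesis $(i)$ applies to $\+L(L)$. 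So the content is concentrated in the two arrows $(ii)\Rightarrow(iv)$ and $(iii)\Rightarrow(iv)$, i.e.\ in extracting the arithmetic constraints on $f$ from closure of the small lattices; and actually it suffices to prove $(ii)\Rightarrow(iv)$ and $(iii)\Rightarrow(iv)$ separately, or even better to prove that \emph{each} of $(ii)$ and $(iii)$ individually forces $(iv)$, which is the real work.

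For $(ii)\Rightarrow(iv)$, I would argue by contraposition, taking a single well-chosen finite $L$ for each of the three properties in $(iv)$. To see that $f(a)\ge a$ must hold: if $f(a)=c<a$ for some $a$, take $L=\{c\}$, a finite set; then $\+D(L)=\{\{c\},\{c-1\},\dots,\{0\},\emptyset\}$ and every nonempty member of $\+L(L)$ is a singleton $\{j\}$ with $j\le c<a$ (intersections of distinct singletons are empty), so $a\notin K$ for any $K\in\+L(L)$, yet $a\in f^{-1}(L)$, contradicting closure. To see $f$ is non decreasing: if $a<b$ but $f(a)>f(b)$, put $L=\{f(b)\}$; then $f^{-1}(L)$ contains $b$ but not $a$ (as $f(a)\ne f(b)$), whereas one checks that in $\+L(\{f(b)\})$ every nonempty set is a downward-closed-looking singleton, and more to the point the nonempty members containing $b$ are exactly the $\{j\}-i$ intersected appropriately — the key point being that $b\in f^{-1}(L)$ and the only way to isolate $b$ by intersections of decrements of a singleton forces also smaller elements, so membership is not separable in $\+L(L)$; a cleaner route is to use the congruence obstruction jointly. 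For the congruence $(a-b)\mid f(a)-f(b)$: suppose $a>b$ with $d=a-b$ and $f(a)\not\equiv f(b)\pmod d$. The idea is to choose $L$ to be a single well-placed point (or a short block) detecting the residue mismatch; with $L=\{f(b)\}$ one has $b\in f^{-1}(L)$, and I would show that for $\+L(\{m\})$ the family of decrements is $\{\{m\},\{m-1\},\dots,\{0\},\emptyset\}$ so $\+L(\{m\})$ consists of all finite sets $\{0,1,\dots,j\}$?—no: intersections of distinct singletons vanish, so $\+L(\{m\})=\{\{j\}:0\le j\le m\}\cup\{\emptyset\}$. Hence membership in any $K\in\+L(\{m\})$ of two distinct naturals is impossible, which immediately gives $f(a)\ge a$ and injectivity-on-fibers constraints; the congruence then follows by iterating decrements through the periodic structure, which is where I would instead lean on $(iii)$.

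For $(iii)\Rightarrow(iv)$, which I expect to carry the congruence condition most transparently, fix an arithmetic progression $L=q+r\N$. By Lemma~\ref{l:progression}, $\+L(L)$ consists (for $r\ge2$) exactly of the sets $A+r\N$ with $A\subseteq\{0,\dots,\max(r-1,q)\}$, and (for $r=1$) of the tails $s+\N$. Closure under $f^{-1}$ then says $f^{-1}(A+r\N)$ is again of this form. Taking $r$ large and $A$ a single residue, so $L=q+r\N$ with $q=0$, i.e.\ $L=r\N$: then $f^{-1}(r\N)=\{a:r\mid f(a)\}$ must equal $A'+r\N$ for some $A'\subseteq\{0,\dots,r-1\}$, which says precisely that membership of $a$ in $f^{-1}(r\N)$ depends only on $a\bmod r$; applied to a pair $a\equiv b\pmod r$ with $r=a-b$ this yields $r\mid f(a)\iff r\mid f(b)$, and pushing $b\to 0$ and using $f(0)\ge 0$ together with shifting the progression by $q=f(b)\bmod r$ gives $f(a)\equiv f(b)\pmod{r}$ for all $a\equiv b\pmod r$; since this holds for every $r$, and for $a\ne b$ we may take $r=|a-b|$, we obtain $(a-b)\mid f(a)-f(b)$ for all $a,b$. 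The properties $f(a)\ge a$ and monotonicity come out of the $r=1$ case (tails $s+\N$): $f^{-1}(s+\N)=\{a:f(a)\ge s\}$ must be a tail $t+\N$, i.e.\ $\{a:f(a)\ge s\}$ is upward closed for every $s$, which is exactly monotonicity of $f$; and $f^{-1}(0+\N)=\N$ forces nothing, but taking $s=a$ and noting $a$ must lie in the tail $f^{-1}(a+\N)=t+\N$ with $t\le a$ (because, by Lemma~\ref{l:progression}(2)(ii), $\+L(0+\N)$ only contains tails $t+\N$ with $t\le 0$?—no, with $t\le q=0$, so only $\N$ itself; hence we instead use $L=a+\N$, whose lattice $\+L(a+\N)$ is $\{s+\N:0\le s\le a\}$, and $f^{-1}(a+\N)=\{x:f(x)\ge a\}$ being some $s+\N$ with $s\le a$ forces $f(a)\ge a$). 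The main obstacle I anticipate is bookkeeping: making sure the single progression chosen simultaneously witnesses the failure of $(iv)$, i.e.\ that the "bad" pair $(a,b)$ actually produces a set outside $\+L(L)$ rather than being absorbed by allowing $A$ up to $\max(r-1,q)$; this is handled by choosing $r>a$ so that $\max(r-1,q)=r-1$ and the forbidden larger residues genuinely cannot appear.
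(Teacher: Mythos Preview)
Your overall plan is right, and the implications $(iv)\Rightarrow(i)$, $(i)\Rightarrow(ii)$, $(i)\Rightarrow(iii)$ are correctly identified as trivial or already available. Your argument for $(iii)\Rightarrow(iv)$ is essentially correct and close in spirit to the paper's (the paper uses $L=f(a)+(b-a)\N$ directly rather than your reduced-residue version, but both work once one tracks the shape of $\+L(q+r\N)$ via Lemma~\ref{l:progression}).

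The genuine gap is in $(ii)\Rightarrow(iv)$: you do not prove the congruence condition $(a-b)\mid f(a)-f(b)$ from closure of $\+L(L)$ for \emph{finite} $L$, and you cannot bypass it by ``leaning on $(iii)$'', since the four-way equivalence requires an arrow out of $(ii)$ on its own and $(ii)$ does not obviously imply $(iii)$. Moreover, your computation of $\+L(\{m\})$ is wrong: a lattice is closed under finite \emph{unions} as well as intersections, and the decrements of $\{m\}$ are the singletons $\{0\},\dots,\{m\}$ together with $\emptyset$, so $\+L(\{m\})$ is the full power set of $\{0,\dots,m\}$, not just singletons. In particular a set in $\+L(\{m\})$ can perfectly well contain two distinct elements, so the attempted monotonicity and congruence arguments based on singletons collapse.

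The missing idea, which the paper supplies, is to test with a finite \emph{truncated arithmetic progression}. Assuming $a>b$ and $f(a)\notin f(b)+(a-b)\N$, put $d=a-b$, $k=\lfloor f(a)/d\rfloor$, and
\[
L=\{\,f(a)-jd : 0\le j\le k\,\}.
\]
Then $f(a)\in L$ but $f(b)\notin L$, so $a\in f^{-1}(L)$ and $b\notin f^{-1}(L)$. One then checks that for every $i$ with $a\in L-i$ one also has $b\in L-i$ (this uses $f(a)\ge a$, already proved from $(ii)$, to ensure the progression extends at least one step below $a$); hence every set in $\+L(L)$ containing $a$ also contains $b$, so $f^{-1}(L)\notin\+L(L)$. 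This single construction yields both the congruence and, as a byproduct, monotonicity (since $f(a)-f(b)\in(a-b)\N$ with $a>b$ forces $f(a)\ge f(b)$).
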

\begin{proof}
{ (iv) $ \Rightarrow$  (i)}.  This is Theorem~\ref{thm:main2to1}.
\\
{ (i)  $\Rightarrow$  (ii)}.  Finite sets are regular sets.
\\
{ (i)  $\Rightarrow$  (iii)}.  Arithmetic progressions
are regular sets.
\\
(ii) $\Rightarrow$  (iv). 
We first prove that $f(a)\geq a$, for all $a\in \N$. 
Let $a\in\N$ and $L=\{f(a)\}$.
Observe that the smallest lattice containing the set $\{f(a)\}$
and closed under decrementation is the family of subsets
of $\{0,1,\ldots, f(a)-1, f(a)\}$.
As a consequence,
all elements of $f^{-1}(L)$ must be less than $f(a)$.
In particular $a\leq f(a)$, since $a\in f^{-1}(L)$.

We prove now that $f(a)-f(b)\in(a-b)\N$ for all $a,b\in \N$
such that $a>b$.
In particular, $f$ is monotone non decreasing and
$f(a)-f(b)\equiv0\  \pmod {(a-b)}$.
We argue by contradiction. Suppose that $f(a)\notin f(b)+(a-b)\N$.
Let
$$
\ell=\left\lfloor\frac{f(a)-a}{a-b}\right\rfloor
,\quad
k=\left\lfloor\frac{f(a)}{a-b}\right\rfloor
,\quad
L\ =\ \{f(a) -j(a-b) \mid 0\leq j\leq k\}\ .
$$
Since $f(a)\geq a$, we have $\ell\geq0$;  moreover, 
$$
k=\left\lfloor\frac{f(a)}{a-b}\right\rfloor
=\left\lfloor\frac{f(a)-a}{a-b}+1+\frac{b}{a-b}\right\rfloor
$$
hence $k\geq\ell+1$.

For $j\in\{0,\ldots,k\}$,
$f(a)\neq f(b)+j(a-b)$ hence $f(b)\neq f(a)-j(a-b)$.
Thus, $f(b)\notin L$ and $b\notin f^{-1}(L)$.
Of course, $f(a)\in L$ and $a\in f^{-1}(L)$.
To get a contradiction, we show that $f^{-1}(L)$ is not in $\+L(L)$.
Since $f^{-1}(L)$ contains $a$ but not $b$,
it suffices to show that every set $X\in\+L(L)$ which contains $a$
also contains $b$.
Since $\+L(L)$ is generated by the $L-i$'s,
we reduce to show that,
for all $i$, if $a$ is in $L-i$ then so is $b$.
Now, using the definition of $\ell$, for all $i\in \N$
$$\begin{array}{rcl}
\ a\in L-i
&\iff& \exists \alpha\in\{0,\ldots,k\}\ \ a=f(a)-\alpha(a-b)-i
\\
&\iff& \exists \alpha\in\{0,\ldots,k\}\ \ i=f(a)-a-\alpha(a-b)
\\
&\iff& \exists \alpha\in\{0,\ldots,\ell\}\ \ i=f(a)-a-\alpha(a-b)
\\
\multicolumn{3}{l}
{\text{and, for $i$ associated to such an $\alpha\in\{0,\ldots,\ell\}$,}}
\\
L-i &=& \N\ \cap\ \{a+(\alpha-j)(a-b) \mid j\in\{0,\ldots,k\}\}
\\
\multicolumn{3}{l}
{\text{letting $j=\alpha$ and $j=\alpha+1$ (which is $\leq k$ since
 $\alpha\leq\ell<k$), we see that}}
\\
L-i &\supseteq& \{a,b\}\ .
\end{array}$$
This gives the required contradiction.

{(iii)  $\Rightarrow$ (iv).} 
Note first that if (iii) holds, $f$ cannot be constant:
indeed, for any constant function $f(x)=a$, there exists an arithmetic progression, namely $L=a+1+\N$,
such that the lattice $\+L(L)$ is not closed under $f^{-1}$.
In fact,  $f^{-1}(L)=\emptyset \notin \+L(L)$
because  all sets of $\+D(L)$ are of the form
$\{\ell+\N\ |\  0\leq \ell \leq a+1\}$,
hence all their finite intersections contain $a+1+\N$
and so are not empty.

By Lemma \ref{l:progression}, 
if $L=q+r\N$, with $q,r\in\N$, $r\geq1$,
then $\+L(L)$ is the family of sets of the form $B+r\N$
with $B\subseteq\{0,\ldots,\max(q,r-1)\}$, and $B\not=\emptyset$ if $r=1$.

First, we check that $f$ is non decreasing.
Let $a<b$ and let $L=f(a)+\N$.  Note that, since $r=1$, a  set $B+\N$ is equal to $\min(B)+\N$
and $\emptyset\not\in \+L(L)$,
hence $f^{-1}(L)=s+\N$, with $s\leq f(a)$; as $a\in f^{-1}(L)$, then $a\in s+\N$, i.e. $a\geq s$, and $f(s+\N)\subseteq L$.
Since $b>a\geq s$ we get $b\in s+\N$ and $f(b)\in L$ hence $f(b)\geq f(a)$.

Second, we show that $f(b)-f(a)\in(b-a)\N$ whenever $a<b$.
Let $L=f(a)+(b-a)\N$.
Then, in view of Lemma \ref{l:progression}, we may write  $ f^{-1}(L)=A+(b-a)\N$.
Since $a\in f^{-1}(L)$, we have $a+(b-a)\N\subseteq f^{-1}(L)$
hence $b=a+(b-a)\in f^{-1}(L)$, i.e. $f(b)\in L$, whence
for some $k$,  $f(b)=f(a)+k(b-a)$.

Finally, we show that $f(a)\geq a$ for all $a\in\N$.
Suppose that, for some $a$, $f(a)<a$.
Since  $a$ divides $f(a)-f(0)\leq f(a)<a$, we have $f(a) =f(0)$
hence $f$ is constant on $\{0,\ldots,a\}$ with value $<a$.
\\
{\it Case 1. There are infinitely many $a$'s such that $f(a)<a$.}
Then $f$ is constant, contradicting what was proved above.
\\
{\it Case 2. There is a largest $a$ such that $f(a)<a$.}
Then $f(x)=f(0)<a$ for $x\leq a$ and $f(x)\geq x>a$ for $x>a$.
Thus, $\emptyset\neq f^{-1}(a+\N)\subseteq(a+1)+\N$
is not in $\+L(a+\N)$,
 contradicting (iii).
\end{proof}

\section*{Acknowledgments}
  We thank the anonymous referees for their insightful reading and comments which helped
in   improving the paper.
%


\end{document}